\numberwithin{equation}{section} % provided by the amsmath package
\newtheorem{theorem}{Theorem}[section]
\newtheorem{lemma}[theorem]{Lemma}
\newtheorem{remark}{Remark}[section]
\theoremstyle{definition}
\newtheorem{definition}{Definition}[section]
\newtheorem{example}{Example}[section]
\newtheorem{problem}{Problem}[section]
\algrenewcommand{\alglinenumber}[1]{\footnotesize Step #1:}
\newcommand{\ind}[1]{\mathrlap{\qquad #1}} % to handle indexing variables
\newcommand{\texteq}{\mathrm}
\newcommand{\card}[1]{\lvert #1 \rvert} % cardinality of a set
\let\proglang=\textsf % nice R language font.
\def\tmin{$\square$}
\def\tmax{$\blacksquare$}
\def\N{{\mathbb N}}
\def\R{{\mathbb R}}
\def\A{\mathcal A}
\def\B{\mathcal B}
\def\C{\mathcal C}
\def\D{\mathcal D}
\def\L{\mathcal L}
\def\U{\mathcal U}
\def\H{\mathcal H}
\def\x{\mathbf x}
\def\n{\mathbf n}
\def\zero{\mathbf 0}
\def\Lt{\widetilde{\mathcal L}}
\def\Ut{\widetilde{\mathcal U}}
\title{Recursive Neyman Algorithm for Optimum Sample Allocation under Box Constraints on Sample Sizes in Strata}
\author{
    Jacek Wesołowski
    \thanks{Programming, Coordination of Statistical Surveys and Registers Department, Statistics Poland, Aleja 
    Niepodległości 208, 00-925 Warsaw, Poland, and Faculty of Mathematics and Information Science, Warsaw University of 
    Technology, ul. Koszykowa 75, 00-662 Warsaw, Poland. E-mail: \mbox{jacek.wesolowski@pw.edu.pl}},
    Robert Wieczorkowski
    \thanks{Programming, Coordination of Statistical Surveys and Registers Department, Statistics Poland, Aleja Niepodległości 208, 00-925 Warsaw, Poland. \mbox{E-mail: R.Wieczorkowski@stat.gov.pl}},
    Wojciech Wójciak
    \thanks{Faculty of Mathematics and Information Science, Warsaw University of Technology, ul. Koszykowa 75, 00-662 Warsaw, Poland. E-mail: \mbox{wojciech.wojciak.dokt@pw.edu.pl}}
}
\begin{document}
\maketitle
	
%%%%%%%%%%%%%%%%%%%%%%%%%%%%%%%%%%%%%%%%%%
%																																		%
%													ABSTRACT																 %
%																																		%
%%%%%%%%%%%%%%%%%%%%%%%%%%%%%%%%%%%%%%%%%%
\begin{abstract}
The optimum sample allocation in stratified sampling is one of the basic issues of survey methodology. It is a procedure of 
dividing the overall sample size into strata sample sizes in such a way that for given sampling designs in strata the variance of 
the {\em stratified $\pi$ estimator} of the population total (or mean) for a given study variable assumes its minimum. In this 
work, we consider the optimum allocation of a sample, under lower and upper bounds imposed jointly on sample sizes in 
strata. We are concerned with the variance function of some generic form that, in particular, covers the case of the {\em 
simple random sampling without replacement} in strata.
The goal of this paper is twofold. First, we establish (using the Karush-Kuhn-Tucker conditions) a generic form of the optimal 
solution, the so-called optimality conditions. Second, based on the established optimality conditions, we derive an efficient 
recursive algorithm, named {\em RNABOX}, which solves the allocation problem under study. The {\em RNABOX} can be 
viewed as a generalization of the classical recursive Neyman allocation algorithm, a popular tool for optimum allocation when 
only upper bounds are imposed on sample strata-sizes. We implement {\em RNABOX} in \proglang{R} as a part of our 
package {\tt stratallo} which is available from the Comprehensive R Archive Network (CRAN) repository.

\bigskip
\noindent \textbf{Key Words:} Stratified sampling; Optimum allocation; Optimum allocation under box constraints; Neyman 
allocation; Recursive Neyman algorithm.
\end{abstract}

%%%%%%%%%%%%%%%%%%%%%%%%%%%%%%%%%%%%%%%%%%
%																																		%
%			Introduction																										%
%																																		%
%%%%%%%%%%%%%%%%%%%%%%%%%%%%%%%%%%%%%%%%%%
\section{Introduction} 
\label{sec:intro}

Let us consider a finite population $U$ of size $N$. Suppose the parameter of interest is the population total $t$ of a variable 
$y$ in $U$, i.e. $t = \sum_{k \in U}\, y_k$, where $y_k$ denotes the value of $y$ for population element $k \in U$. To 
estimate $t$, we consider the {\em stratified sampling} with the {\em $\pi$ estimator}. Under this well-known sampling 
technique, population $U$ is stratified, i.e. $U = \bigcup_{h \in \H}\, U_h$, where $U_h,\, h \in \H$, called strata, are disjoint 
and non-empty, and $\H$ denotes a finite set of strata labels. The size of stratum $U_h$ is denoted $N_h,\, h \in \H$ and 
clearly $\sum_{h \in \H} N_h = N$. Probability samples $s_h \subseteq U_h$ of size $n_h \leq N_h,\, h \in \H$, are selected 
independently from each stratum according to chosen sampling designs which are often of the same type across strata. The 
resulting total sample is of size $n = \sum_{h \in \H}\, n_h \leq N$. It is well known that the {\em stratified $\pi$ estimator} 
$\hat{t}_{\pi}$ of $t$ and its variance are expressed in terms of the first and second order inclusion probabilities \citep[see, 
e.g.][Result 3.7.1, p.~102]{Sarndal}. In particular, for several important sampling designs
\begin{equation}
    \label{eq:var}
    Var(\hat{t}_{\pi}) = \sum_{h \in \H}\, \tfrac{A_h^2}{n_h} - B,
\end{equation}
where $A_h > 0,\, B$ do not depend on $n_h,\, h \in \H$.
Among the most basic and common sampling designs that give rise to the variance of the form \eqref{eq:var} is the {\em 
simple random sampling without replacement} in strata (abbreviated {\em STSI}). In this case, the {\em stratified $\pi$ 
estimator} of $t$ assumes the form
\begin{equation}
    \hat{t}_{\pi} = \sum_{h \in \H}\, \tfrac{N_h}{n_h} \sum_{k \in s_h} y_k,
\end{equation}
which yields in \eqref{eq:var}: $A_h = N_h S_h$, where $S_h$ denotes stratum standard deviation of study variable $y$, $h 
\in \H$, and $B = \sum_{h \in \H}\, N_h S_h^2$ \citep[see, e.g.][Result 3.7.2, p.~103]{Sarndal}. 

The classical problem of optimum sample allocation is formulated as the determination of the allocation vector $\n = (n_h,\, h 
\in \H)$ that minimizes the variance \eqref{eq:var}, subject to $\sum_{h \in \H}\, n_h = n$, for a given $n \leq N$ \citep[see, 
e.g.][Section 3.7.3, p.~104]{Sarndal}. In this paper, we are interested in the classical optimum sample allocation problem with 
additional two-sided constraints imposed on sample sizes in strata. We phrase this problem in the language of mathematical 
optimization as Problem \ref{prob}.

%%%
%%% Definition: box-constrained problem
%%%
\begin{problem}
    \label{prob}
    Given a finite set $\H \neq \emptyset$ and numbers $A_h > 0,\,m_h,\, M_h,\, n$, such that $0<m_h < M_h \leq N_h,\, h \in 
    \H$ and $\sum_{h \in \H}\, m_h \le n \le \sum_{h \in \H}\, M_h$,	
    \begin{align}
		\underset{\x\, =\, (x_h,\, h\, \in\, \H)\, \in\, \R_+^{\card{\H}}}{\texteq{minimize ~\,}} & \quad \sum_{h \in \H} 
		\tfrac{A_h^2}{x_h} \label{eq:prob_obj} \\
		\texteq{subject ~ to \quad\,\,\,} & \quad \sum_{h \in \H} x_h = n \label{eq:prob_cnst_eq} \\
		& \quad m_h \le x_h \le M_h,  \ind{h \in \H.} \label{eq:prob_cnst_ineq}
	\end{align}
\end{problem}

To emphasize the fact that the optimal solution to Problem \ref{prob} may not be an integer one, we denote the optimization variable by $\x$, not by $\n$. The assumptions about $n,\, m_h,\, M_h,\, h \in \H$, ensure that Problem \ref{prob} is feasible. 

The upper bounds imposed on $x_h,\, h \in \H$, are natural since for instance the solution with $x_h > N_h$ for some $h \in 
\H$ is impossible. The lower bounds are necessary e.g. for estimation of population strata variances $S_h^2,\, h \in \H$. They 
also appear when one treats strata as domains and assigns upper bounds for variances of estimators of totals in domains. 
Such approach was considered e.g. in \citet{Choudhry}, where apart of the upper bounds constraints $x_h \le N_h,\, h \in \H$, 
the additional constraints $\left(\tfrac{1}{x_h} - \tfrac{1}{N_h} \right) N_h^2 S_h^2 \le R_h,\, h \in \H$, where $R_h,\, h \in \H$, 
are given constants, have been imposed. Obviously, the latter system of inequalities can be rewritten as lower bounds 
constraints of the form $x_h \geq m_h = \tfrac{N_h^2 S_h^2}{R_h + N_h S_h^2},\, h \in \H$. The solution given in 
\citet{Choudhry} was obtained by the procedure  based on the Newton-Raphson algorithm, a general-purpose root-finding 
numerical method. See also a related paper by \citet{WNB}, where the optimum allocation problem under the constraint of the 
equal precision for estimation of the strata means was considered.

\bigskip
It is convenient to introduce the following definition for feasible solutions of Problem \ref{prob}.

%%%
%%% Definition: vertex allocation
%%%
\begin{definition}
    \label{def:vertex_regular}
    {\em Any vector $\x = (x_h,\, h \in \H)$ satisfying \eqref{eq:prob_cnst_eq} and \eqref{eq:prob_cnst_ineq} will be called an {\em allocation}. 

    An allocation  $\x= (x_h,\, h \in \H)$ is called a {\em vertex} one if and only if
    \begin{equation*}
		x_h = 
		\begin{cases}
			m_h, & \ind{h \in \L} \\
			M_h, & \ind{h \in \U,}
		\end{cases}
    \end{equation*}
    where $\L,\, \U \subseteq \H$ are such that $\L \cup \U = \H$ and $\L \cap \U = \emptyset$. 
 
    An allocation which is not a {\em vertex} one  will be called a {\em regular} allocation.
    
    The solution to Problem \ref{prob} will be called the {\em optimum} allocation. }
\end{definition}

Note that an optimum allocation may be of a {\em vertex} or of a {\em regular} form. The name {\em vertex} allocation refers 
to the fact that in this case $\x$ is a vertex of the hyper-rectangle $\times_{h\in\H}\,[m_h,\, M_h]$. We note that Problem 
\ref{prob} becomes trivial if $n = \sum_{h \in \H}\, m_h$ or $n = \sum_{h \in \H}\, M_h$. In the former case, the solution is 
$\x^* = (m_h,\, h \in \H)$, and in the latter $\x^* = (M_h,\, h \in \H)$. These two are boundary cases of the {\em vertex} 
allocation. In real surveys with many strata, a {\em vertex} optimum allocation rather would not be expected. Nevertheless, for 
completeness we also consider such a case in Theorem \ref{th:optcond}, which describes the form of the optimum allocation 
vector. We also note that a {\em regular} optimum allocation $\x^* \in \times_{h \in \H}\, (m_h,\, M_h)$ if and only if it is the 
classical Tschuprow-Neyman allocation $\x^*= (A_h\, \tfrac{n}{\sum_{v \in \H} A_v},\, h \in \H)$ 
\citep[see][]{Tschuprow1923b, Neyman}.

\bigskip
The rest of this paper is structured as follows. Section \ref{sec:motivation} presents motivations for this research as well as a 
brief review of the literature. In Section \ref{sec:optcon}, we identify Problem \ref{prob} as a convex optimization problem and 
then use the Karush-Kuhn-Tucker conditions to establish necessary and sufficient conditions for a solution to optimization 
Problem \ref{prob}. These conditions, called the optimality conditions, are presented in Theorem \ref{th:optcond}. In Section 
\ref{sec:rnabox}, based on these optimality conditions, we introduce a new algorithm, {\em RNABOX}, and prove that it solves 
Problem \ref{prob} (see Theorem \ref{th:rnabox}). The name {\em RNABOX} refers to the fact that this algorithm generalizes 
the recursive Neyman algorithm, denoted here {\em RNA}. The {\em RNA} is a well-established allocation procedure, 
commonly used in everyday survey practice. It finds a solution to the allocation Problem \ref{prob:upper} (see below), which 
is a relaxed version of Problem \ref{prob}. 
As we shall see in Section \ref{sec:naive_rnabox}, a naive modification of the recursive Neyman  approach, which works fine 
for the one-sided upper bounds imposed on the sample strata sizes, does not give the correct solution in the case of 
two-sided bounds. A more subtle approach is needed.

\bigskip
Finally, let us note that the implementation of {\em RNABOX} algorithm is available through our \proglang{R} package {\tt 
stratallo} \citep{stratallo}, which is published in CRAN repository \citep{R}.

%%%%%%%%%%%%%%%%%%%%%%%%%%%%%%%%%%%%%%%%%%
%																																		%
%			Motivation: Existing Literature and the Current Result									%
%																																		%
%%%%%%%%%%%%%%%%%%%%%%%%%%%%%%%%%%%%%%%%%%
\section{Motivation and literature review}
\label{sec:motivation}

An abundant body of literature is devoted to the problem of optimum sample allocation, going back to classical solution of 
\citet{Tschuprow1923b} and \citet{Neyman}, dedicated to {\em STSI} sampling without taking inequality constraints 
\eqref{eq:prob_cnst_ineq} into account. In spite of this fact, a thorough analysis of the literature shows that Problem 
\ref{prob} has not been completely understood yet and it suffers from the lack of fully satisfactory algorithms. 

Below, we briefly review existing methods for solving Problem \ref{prob}, including methods that provide integer-valued 
solutions.

%%%
%%% Not-necessarily integer allocation
%%%
\subsection{Not-necessarily integer-valued allocation} 
\label{ssec:motivation_notint}

An approximate solution to Problem \ref{prob} can be achieved through generic methods of non-linear programming (NLP) 
\citep[see, e.g. the monograph][and references therein]{Valliant}. These methods have been involved in the problem of 
optimum sample allocation since solving the allocation problem is equivalent to finding the extreme (namely, stationary 
points) of a certain objective function over a feasible set. Knowing the (approximate) extreme of the objective function, one 
can determine the (approximate, yet typically sufficiently accurate) sizes of samples allocated to individual strata. 

In a similar yet different approach adopted e.g. in \citet{MSW}, Problem \ref{prob} is transformed into root-finding or 
fixed-point-finding problems (of some properly defined function) to which the solution is obtained by general-purpose 
algorithms like e.g. {\em bisection} or {\em regula falsi}.

Algorithms used in both these approaches would in principle have infinitely many steps, and are stopped by an arbitrary 
decision, typically related to the precision of the iterates. There are two main weaknesses associated with this way of 
operating: failure of the method to converge or slow convergence towards the optimal solution for some poor starting points. 
In other words, performance of these algorithms may strongly depend on an initial choice of a starting point, and such a 
choice is almost always somewhat hazardous. These and similar deficiencies are discussed in details in Appendix 
\ref{app:msw} in the context of the allocation methods described in \citet{MSW}. Another drawback of the algorithms of this 
type is their sensitivity to finite precision arithmetic issues that can arise in case when the stopping criterion is not expressed 
directly in terms of the allocation vector iterates (which is often the case).

Contrary to that, in the recursive algorithms (we are concerned with), the optimal solution is always found by recursive search 
of feasible candidates for the optimum allocation among subsets of $\H$. Hence, they stop always at the exact solution and 
after finitely many iterations (not exceeding the number of strata + 1, as we will see for the case of {\it RNABOX} in the proof 
of Theorem \ref{th:rnabox}). An important example of such an algorithm, is the recursive Neyman algorithm, {\em RNA}, 
dedicated for Problem \ref{prob:upper}, a relaxed version of Problem \ref{prob}.
\begin{problem}
    \label{prob:upper}
    Given a finite set $\H \neq \emptyset$ and numbers $A_h > 0,\, M_h,\, n > 0$, such that $0<M_h \leq N_h,\, h \in \H$ and $n \le \sum_{h \in \H}\, M_h$, 
    \begin{align*}
		\underset{\x\, =\, (x_h,\, h\, \in\, \H)\, \in\, \R_+^{\card{\H}}}{\texteq{minimize ~\,}}  & \quad \sum_{h \in \H} 
		\tfrac{A_h^2}{x_h} \\
		\texteq{subject ~ to \quad\,\,\,}	& \quad \sum_{h \in \H} x_h = n \\
		& \quad x_h \le M_h,  \ind{h \in \H.}
    \end{align*}
\end{problem}
Although {\em RNA} is popular among practitioners, a formal proof of the fact that it gives the optimal solution to Problem 
\ref{prob:upper} has been given only recently in \citet{WWW}. For other recursive approaches to Problem \ref{prob:upper}, 
see also e.g. \citet{SG}, \citet{Kadane}.

To the best of our knowledge, the only non-integer recursive optimum allocation algorithm described in the literature that is 
intended to solve Problem \ref{prob} is the \emph{noptcond} procedure proposed by \citet{GGM}. In contrary to {\em 
RNABOX}, this method in particular performs strata sorting. Unfortunately, the allocation computed by {\em noptcond} may 
not yield the minimum of the objective function \eqref{eq:prob_obj}; see Appendix \ref{app:ggm} for more details.

%%%
%%% Integer allocation
%%%
\subsection{Integer-valued allocation} 
\label{ssec:motivation_int}

Integer-valued algorithms dedicated to Problem \ref{prob} are proposed in \cite{Friedrich, Wright2017, Wright2020}. The 
multivariate version of the optimum sample allocation problem under box constraints in which $m_h = m,\, h \in \H$, for a 
given constant $m$, is considered in the paper of \citet{Brito}. The proposed procedure that solves that problem uses binary 
integer programming algorithm and can be applied to the univariate case. See also \cite{stratbr} for the 
\proglang{R}-implementation of this approach.

Integer-valued allocation methods proposed in these papers are precise (not approximate) and theoretically sound. However, 
they are relatively slow, when compared with not-necessarily integer-valued algorithms. For instance, at least for one-sided 
constraints, the integer-valued algorithm {\em capacity scaling} of \citet{Friedrich} may be thousands of times slower than 
the {\it RNA} \citep[see][Section 4]{WWW}. This seems to be a major drawback of these methods as the differences in 
variances of estimators based on integer-rounded non-integer optimum allocation and integer optimum allocation are 
negligible as explained in Section \ref{sec:conclusions}. The computational efficiency is of particular significance when the 
number of strata is large, see, e.g. application to the German census in \citet{Burgard}, and it becomes even more 
pronounced in iterative solutions to stratification problems, when the number of iterations may count in millions \citep[see, 
e.g.][]{Lednicki2, GH, KNA, Baillargeon, Barcaroli}.

\bigskip
Having all that said, the search for a new, universal, theoretically sound and computationally effective recursive algorithms of optimum sample allocation under two-sided constraints on the sample strata-sizes, is crucial both for theory and practice of survey sampling.

%%%%%%%%%%%%%%%%%%%%%%%%%%%%%%%%%%%%%%%%%%
%																																		%
%			Optimality Conditions																						%
%																																		%
%%%%%%%%%%%%%%%%%%%%%%%%%%%%%%%%%%%%%%%%%%
\section{Optimality conditions} 
\label{sec:optcon}

In this section we establish optimality conditions, that is, a general form of the solution to Problem \ref{prob}. As it will be 
seen in Section \ref{sec:rnabox}, these optimality conditions are crucial for the construction of {\em RNABOX} algorithm.

Before we establish necessary and sufficient optimality conditions for a solution to optimization Problem \ref{prob}, we first define a set function $s$, which considerably simplifies notation and  calculations.

%%%
%%% Definition: s(L, U)
%%%
\begin{definition}
    \label{def:s}
    Let $\H,\, n,\, A_h > 0,\, m_h,\, M_h,\, h \in \H$ be as in Problem \ref{prob} and let $\L,\, \U \subseteq \H$ be such that $\L 
    \cap \U = \emptyset,\, \L \cup \U \subsetneq \H$. The set function $s$ is defined as
    \begin{equation}
        \label{eq:s}
		s(\L,\, \U) = \frac{n - \sum_{h \in \L} m_h - \sum_{h \in \U} M_h}{\sum_{h \in \H \setminus (\L \cup \U)} A_h}.
    \end{equation}
\end{definition}

%%%
%%% Definition: (L, U)-allocation
%%%
Below, we will introduce the $\x^{(\L,\, \U)}$ vector for disjoint $\L,\, \U \subseteq \H$. It appears that the solution of the 
Problem \ref{prob} is necessarily of the form \eqref{eq:LU} with sets $\L$ and $\U$ defined implicitly through systems of 
equations/inequalities established in Theorem \ref{th:optcond}.

\begin{definition}
    \label{def:LU}
    Let $\H,\, n,\, A_h > 0,\, m_h,\, M_h,\, h \in \H$ be as in Problem \ref{prob}, and let $\L,\, \U \subseteq \H$ be such that $\L 
    \cap \U = \emptyset$. We define the vector $\x^{(\L,\, \U)} = (x_h^{(\L,\, \U)},\, h \in \H)$ as follows
    \begin{equation}
		\label{eq:LU}
		x_h^{(\L,\, \U)} = 
		\begin{cases}
			m_h,					& \ind{h \in \L} \\
			M_h,					& \ind{h \in \U} \\
			A_h\, s(\L,\, \U),	& \ind{h \in \H \setminus (\L \cup \U).}
		\end{cases}
    \end{equation}	
\end{definition}

The following Theorem \ref{th:optcond} characterizes the form of the optimal solution to Problem \ref{prob} and therefore is 
one of the key results of this paper.

%%%
%%% Theorem: OPTIMALITY CONDITIONS
%%%
\begin{theorem}[Optimality conditions]
    \label{th:optcond}
    The optimization Problem \ref{prob} has a unique optimal solution. Point $\x^* \in \R_+^{\card{\H}}$ is a solution to 
    optimization Problem \ref{prob} if and only if $\x^*= \x^{(\L^*,\, \U^*)}$, with disjoint $\L^*,\, \U^* \subseteq \H$, such that 
    one of the following two cases holds:
    \begin{enumerate}[wide, labelindent=0pt, leftmargin=*]
	\item[CASE I:] $\L^* \cup \U^* \subsetneq \H$ and
	\begin{equation}
		\label{eq:optcond}
		\begin{gathered}
			\L^* = \left\{h \in \H:\, s(\L^*,\, \U^*) \leq \tfrac{m_h}{A_h} \right\}, \\
			\U^* = \left\{h \in \H:\, s(\L^*,\, \U^*) \geq \tfrac{M_h}{A_h} \right\}.
		\end{gathered}
	\end{equation}
	
	\item[CASE II:] $\L^* \cup \U^* = \H$ and
	\begin{gather}
		\max_{h \in \U^*} \tfrac{M_h}{A_h} \leq \min_{h \in \L^*} \tfrac{m_h}{A_h} \ind{\text{ if\, } \U^* \neq \emptyset \text{ and } 
		\L^* \neq \emptyset,}, \label{eq:optcond_vertex} \\
		\sum_{h \in \L^*} m_h + \sum_{h \in \U^*} M_h = n. \label{eq:optcond_vertex_n}
	\end{gather}
    \end{enumerate}
\end{theorem}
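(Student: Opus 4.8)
The plan is to recognize Problem \ref{prob} as a convex program and to read off the stated form of the optimum from the Karush--Kuhn--Tucker (KKT) conditions, which here are simultaneously necessary and sufficient because the objective is convex and every constraint is affine. First I would settle existence and uniqueness: the objective $\sum_{h \in \H} A_h^2/x_h$ is a sum of strictly convex functions of single coordinates on the positive orthant, the feasible set is nonempty (by the hypothesis $\sum_{h \in \H} m_h \le n \le \sum_{h \in \H} M_h$), convex, and compact as a closed subset of the hyper-rectangle $\times_{h \in \H}\,[m_h,\, M_h]$, while $m_h > 0$ keeps the objective finite and continuous there. A strictly convex continuous function on a nonempty compact convex set attains its minimum at a unique point $\x^*$.

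Next I would form the Lagrangian with multiplier $\lambda$ for the equality $\sum_{h \in \H} x_h = n$ and multipliers $\mu_h \ge 0$, $\nu_h \ge 0$ for $m_h - x_h \le 0$ and $x_h - M_h \le 0$. Stationarity reads $A_h^2/x_h^2 = \lambda - \mu_h + \nu_h$ for each $h$, while complementary slackness forces $\mu_h = 0$ when $x_h^* > m_h$ and $\nu_h = 0$ when $x_h^* < M_h$. Partitioning $\H$ by the location of $x_h^*$ defines $\L^* = \{h : x_h^* = m_h\}$, $\U^* = \{h : x_h^* = M_h\}$, and an interior set $\H \setminus (\L^* \cup \U^*)$. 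On the interior set both multipliers vanish, so $x_h^* = A_h/\sqrt{\lambda}$ (and $\lambda > 0$ whenever an interior stratum exists); putting $s := 1/\sqrt{\lambda}$ and inserting into the equality constraint recovers exactly $s = s(\L^*, \U^*)$ of Definition \ref{def:s}, whence $\x^* = \x^{(\L^*, \U^*)}$ in the sense of Definition \ref{def:LU}.

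It then remains to convert the sign conditions on the multipliers into set descriptions. For $h \in \L^*$ the inequality $\mu_h \ge 0$ gives $A_h^2/m_h^2 \le \lambda$, i.e.\ $s \le m_h/A_h$; for $h \in \U^*$ the inequality $\nu_h \ge 0$ gives $A_h^2/M_h^2 \ge \lambda$, i.e.\ $s \ge M_h/A_h$; and each interior stratum obeys $m_h/A_h < s < M_h/A_h$ strictly. If some stratum is interior ($\L^* \cup \U^* \subsetneq \H$) we are in CASE~I, and these inequalities together with $m_h < M_h$ show that the three classes coincide with $\{h : s \le m_h/A_h\}$, the interior band $\{h : m_h/A_h < s < M_h/A_h\}$, and $\{h : s \ge M_h/A_h\}$, yielding the fixed-point description \eqref{eq:optcond}. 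If every stratum sits at a bound ($\L^* \cup \U^* = \H$) we are in CASE~II: the equality constraint is exactly \eqref{eq:optcond_vertex_n}, and existence of a feasible $\lambda$ with $\max_{h \in \L^*}(A_h/m_h)^2 \le \lambda \le \min_{h \in \U^*}(A_h/M_h)^2$ is equivalent, after taking square roots and reciprocals, to \eqref{eq:optcond_vertex}. For sufficiency I would reverse the argument: given $\L, \U$ satisfying the CASE~I or CASE~II hypotheses, I would exhibit nonnegative multipliers (with $\lambda = s(\L, \U)^{-2}$ in CASE~I, and any $\lambda$ in the admissible interval in CASE~II) certifying that $\x^{(\L, \U)}$ is a KKT point and hence the unique optimum.

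The step I expect to be the main obstacle is untangling the self-referential CASE~I equations \eqref{eq:optcond}, in which $\L^*$ and $\U^*$ are defined through $s(\L^*, \U^*)$, which in turn depends on them. One must check that the partition induced by $\x^*$ is genuinely consistent with the threshold description: no interior stratum can satisfy $s = m_h/A_h$ or $s = M_h/A_h$ (otherwise $x_h^*$ would coincide with a bound, contradicting interiority), and the disjointness $\L^* \cap \U^* = \emptyset$ must be preserved, which follows from $m_h < M_h$ since $s \le m_h/A_h$ and $s \ge M_h/A_h$ cannot hold together. Handling this fixed-point bookkeeping, rather than the KKT calculus itself, is where the proof demands the most care.
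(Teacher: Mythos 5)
Your proposal is correct and follows essentially the same route as the paper: existence and uniqueness from strict convexity of the objective on a nonempty compact convex feasible set, then the KKT conditions (necessary and sufficient here because all constraints are affine), with the case split according to whether a take-Neyman stratum exists. If anything, you are more explicit than the paper on the necessity direction --- the paper mainly exhibits multipliers certifying that a point of the stated form $\x^{(\L^*,\,\U^*)}$ satisfies KKT and leans on uniqueness, whereas you also derive that form directly from stationarity and complementary slackness by partitioning $\H$ according to where $x^*_h$ sits.
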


\begin{remark}
	\label{rem:optcond:reg_ver}
    The optimum allocation $\x^*$ is a {\em regular} one in CASE I and a {\em vertex} one in CASE II.
\end{remark}

The proof of Theorem \ref{th:optcond} is given in Appendix \ref{app:optcon}. Note that Theorem \ref{th:optcond} describes 
the general form of the optimum allocation up to specification of {\em take-min} and {\em take-max} strata sets $\L^*$ and 
$\U^*$. The question how to identify sets $\L^*$ and $\U^*$ that determine the optimal solution $\x^* = \x^{(\L^*,\, \U^*)}$ 
is the subject of Section \ref{sec:rnabox}.

%%%%%%%%%%%%%%%%%%%%%%%%%%%%%%%%%%%%%%%%%%
%																																		%
%			RNABOX																											   %
%																																		%
%%%%%%%%%%%%%%%%%%%%%%%%%%%%%%%%%%%%%%%%%%
\section{Recursive Neyman algorithm under box constraints} 
\label{sec:rnabox}

\subsection{The {\em RNABOX} algorithm} 
\label{ssec:rnabox}

In this section we introduce an algorithm solving Problem \ref{prob}. In view of Theorem \ref{th:optcond} its essential task is 
to split the set of all strata labels $\H$ into three subsets of {\em take-min} ($\L^*$), {\em take-max} ($\U^*$), and {\em 
take-Neyman} ($\H \setminus (\L^* \cup \U^*)$). We call this new algorithm {\em RNABOX} since it generalizes existing 
algorithm {\em RNA} in the sense that {\em RNABOX} solves optimum allocation problem with simultaneous lower and upper 
bounds, while the {\em RNA} is dedicated for the problem with upper bounds only, i.e. for Problem \ref{prob:upper}. 
Moreover,  {\em RNABOX} uses {\em RNA} in one of its interim steps. We first recall {\em RNA} algorithm and then present 
{\em RNABOX}. For more information on {\em RNA}, see \citet[Section 2]{WWW} or \citet[Remark 12.7.1, p.~466]{Sarndal}.

%%%
%%% Algorithm: RNA
%%%
\begin{algorithm}[ht]
    \caption{{\em RNA}}
    \label{alg:rna}
    \textbf{Input:} $\H,\, (A_h)_{h \in \H},\, (M_h)_{h \in \H},\, n$.
    \begin{algorithmic}[1]
		\Require $A_h > 0,\, M_h > 0,\, h \in \H,\, 0 < n \leq \sum_{h \in \H} M_h$.
		\State Set $\U = \emptyset$.
		\State\label{alg:rna:U}Determine $\Ut = \left\{h \in \H \setminus \U:\, A_h\, s(\emptyset,\, \U) \ge M_h\right\}$, where set 
		function $s$ is defined in \eqref{eq:s}.
        \State If {$\Ut = \emptyset$}, go to \footnotesize Step \ref{alg:rna:step_return} \normalsize. Otherwise, update $\U \gets 
        \U \cup \Ut$ and go to \footnotesize Step \ref{alg:rna:U} \normalsize.
        \State\label{alg:rna:step_return}Return $\x^* = (x^*_h,\, h \in \H)$ with
		$x^*_h =
		\begin{cases}
			M_h,							   & h \in \U \\
			A_h\, s(\emptyset,\, \U),	& h \in \H \setminus \U.
		\end{cases}
		$
    \end{algorithmic}
\end{algorithm}

%%%
%%% Algorithm: RNABOX
%%%
\begin{algorithm}[ht]
    \caption{{\em RNABOX}}
    \label{alg:rnabox}
    \textbf{Input:} $\H,\, (A_h)_{h \in \H},\, (m_h)_{h \in \H},\, (M_h)_{h \in \H},\, n$.
    \begin{algorithmic}[1]
		\Require $A_h > 0,\, 0 < m_h < M_h,\, h \in \H,\, \sum_{h \in \H} m_h \leq n \leq \sum_{h \in \H} M_h$.
		\State Set $\L = \emptyset$.
		\State\label{alg:rnabox:rna}Run {\em RNA}[$\H,\, (A_h)_{h \in \H},\, (M_h)_{h \in \H},\, n$] to obtain $(x^{**}_h,\, h \in 
		\H)$. \newline
		Let $\U = \{h \in \H:\, x^{**}_h = M_h\}$.
		\State\label{alg:rnabox:L}Determine $\Lt = \left\{h \in \H \setminus \U:\, x^{**}_h \leq m_h \right\}$. 
		\State\label{alg:rnabox:if}If {$\Lt =\emptyset$} go to {\footnotesize Step \ref{alg:rnabox:ret}}. Otherwise, update $n \gets 
		n - \sum_{h \in \Lt} m_h,\, \H \gets \H \setminus \Lt$, $\L \gets \L \cup \Lt$ and go to {\footnotesize Step 
		\ref{alg:rnabox:rna}}.
		\State\label{alg:rnabox:ret}Return $\x^* = (x^*_h,\, h \in \L \cup \H)$ with
		$x_h^* = 
		\begin{cases}
			m_h,		  & h \in \L \\
			x^{**}_h,	& h \in \H.
		\end{cases}
		$
    \end{algorithmic}
\end{algorithm}

We note that  in real life applications numbers $(A_h)_{h \in \H}$ are typically unknown and therefore their estimates $(\hat 
A_h)_{h \in \H}$ are used instead in the input of the algorithms.

Theorem \ref{th:rnabox} is the main theoretical result of this paper and its proof is given in Appendix \ref{app:rnabox_proof}.

%%%
%%% Theorem: RNABOX optimal
%%%
\begin{theorem}
    \label{th:rnabox}
    The {\em RNABOX} algorithm provides the optimal solution to  Problem \ref{prob}.
\end{theorem}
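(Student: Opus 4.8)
The plan is to follow the bookkeeping of \emph{RNABOX} through its successive passes and to show that, at termination, the vector it returns coincides with $\x^{(\L^*,\,\U^*)}$ of \eqref{eq:LU} for a pair $(\L^*,\,\U^*)$ satisfying the optimality conditions of Theorem \ref{th:optcond}; uniqueness of the optimum then completes the proof. First I would fix notation for the iterates: put $\H_0=\H$, $n_0=n$, and let $\H_k,\,n_k$ be the stratum set and budget after the $k$-th visit to {\footnotesize Step \ref{alg:rnabox:if}}, with $\Lt_k$ the set then removed and with $\U_k$ and $s_k:=s(\emptyset,\,\U_k)$ the take-max set and Neyman factor returned by the RNA call of {\footnotesize Step \ref{alg:rnabox:rna}} on $[\H_{k-1},\,n_{k-1}]$. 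Each non-terminal pass moves at least one label into $\L$, so the loop halts after at most $\card{\H}+1$ passes; write $\L^*=\bigcup_k \Lt_k$, let $\U^*$ be the take-max set of the final call, and let $\H_{\mathrm{fin}}=\H\setminus\L^*$, $n_{\mathrm{fin}}=n-\sum_{h\in\L^*}m_h$. Because $n_{\mathrm{fin}}-\sum_{h\in\U^*}M_h=n-\sum_{h\in\L^*}m_h-\sum_{h\in\U^*}M_h$ and $\H_{\mathrm{fin}}\setminus\U^*=\H\setminus(\L^*\cup\U^*)$, the Neyman factor of the final call equals the global quantity $s(\L^*,\,\U^*)$ of \eqref{eq:s}; hence the returned vector is exactly $\x^{(\L^*,\,\U^*)}$.

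The crux is the monotonicity lemma: $s_1\ge s_2\ge\cdots\ge s^*:=s(\L^*,\,\U^*)$, together with the companion fact that surviving allocations cannot increase from one pass to the next. I would prove both at once for a single pass. Every $h\in\Lt_k$ is a take-Neyman stratum of the $k$-th RNA solution, so $A_h s_k=x_h^{**}\le m_h$. Deleting $\Lt_k$ while lowering the budget by its \emph{Neyman} mass $\sum_{h\in\Lt_k}A_h s_k$ leaves the RNA solution on $\H_k$ unchanged, with the same factor $s_k$: indeed $(\U_k,\,\H_k\setminus\U_k)$ still satisfies the RNA optimality conditions for the reduced data, and the RNA optimum is unique \citep{WWW}. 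But \emph{RNABOX} lowers the budget by the larger mass $\sum_{h\in\Lt_k}m_h$, and since the RNA optimum is componentwise nondecreasing in $n$, lowering the budget can only lower the allocations and the factor: the surviving allocations do not increase from pass $k$ to pass $k+1$, and $s_{k+1}\le s_k$.

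With the lemma established, the optimality conditions follow by a case split. In the generic situation the final call leaves a take-Neyman stratum, so $\L^*\cup\U^*\subsetneq\H$ and $s^*$ is well defined; I would verify the set identities \eqref{eq:optcond} by two inclusions each. For $\U^*$ this uses that RNA returns $A_h s^*\ge M_h$ exactly on its take-max strata, while on $\L^*$ the lemma gives $A_h s^*\le A_h s_{k_h}\le m_h<M_h$ (where $k_h$ is the pass removing $h$). For $\L^*$ it uses, conversely, that the terminal test $\Lt=\emptyset$ forces $A_h s^*=x_h^{**}>m_h$ on the surviving take-Neyman strata, while the lemma gives $A_h s^*\le m_h$ on all of $\L^*$, and that $A_h s^*\ge M_h>m_h$ on $\U^*$. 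The boundary situation, where the final call sends every surviving stratum to its upper bound (equivalently $n_{\mathrm{fin}}=\sum_{h\in\H_{\mathrm{fin}}}M_h$), gives $\L^*\cup\U^*=\H$: here \eqref{eq:optcond_vertex_n} holds because $\x^{(\L^*,\,\U^*)}$ is feasible, and \eqref{eq:optcond_vertex} follows since the last nondegenerate factor $s_{K-1}$ separates the ratios, $\max_{h\in\U^*}\tfrac{M_h}{A_h}\le s_{K-1}\le\min_{h\in\L^*}\tfrac{m_h}{A_h}$ --- the left inequality because each $h\in\U^*$ is already take-max at pass $K-1$ (a surviving take-Neyman stratum could not be promoted to take-max, by the componentwise monotonicity), the right one by the lemma. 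In both cases $(\L^*,\,\U^*)$ meets Theorem \ref{th:optcond}, so $\x^{(\L^*,\,\U^*)}$ is the unique optimum.

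The monotonicity lemma is the step I expect to be the main obstacle: it is the only place where the separate RNA calls genuinely interact, and it rests on two properties of RNA that must be invoked with care --- the invariance of the RNA solution when a take-Neyman stratum is deleted together with its own allocation, and componentwise monotonicity of the RNA optimum in the total size $n$. Pinning these down (ideally by quoting \citet{WWW} rather than re-deriving them) carries the analytic weight; the identification of the returned vector and the inclusion-checking are then routine bookkeeping.
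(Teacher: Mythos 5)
Your proposal is correct in outline and reaches the same target as the paper --- verifying that the terminal pair $(\L^*,\,\U^*)$ satisfies the optimality conditions of Theorem \ref{th:optcond}, with the same termination bound $r^*\le\card{\H}+1$ and essentially the same two-inclusion bookkeeping in the regular and vertex cases --- but it proves the central monotonicity facts by a genuinely different route. The paper establishes $\U_r\supseteq\U_{r+1}$ (Lemma \ref{lem:U}) and $s(\L_r,\,\U_r)\ge s(\L_{r+1},\,\U_{r+1})$ (Lemma \ref{lem:s_in_rnabox}) separately, each by contradiction, via the algebraic monotonicity properties of the set function $s$ in Lemma \ref{lem:smono} and a careful case split between regular and vertex terminations. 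You instead derive both at once from two structural properties of the one-sided RNA solution: deletion invariance (removing a take-Neyman stratum together with its own Neyman allocation leaves the solution on the remaining strata unchanged) and componentwise monotonicity of the RNA optimum in the budget $n$. This is more conceptual and arguably more illuminating --- it explains \emph{why} the $s$-values and the take-max sets can only decrease (the budget drop $m_{\Lt_k}$ exceeds the Neyman mass $\sum_{h\in\Lt_k}A_h s_k$ that would leave everything fixed) --- whereas the paper's route is entirely self-contained and needs nothing about RNA beyond its output characterization.

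The one point you must pin down is the componentwise monotonicity of the RNA optimum in $n$: you propose to quote \citet{WWW}, but that paper characterizes the optimum for a fixed $n$ and does not state monotonicity in the budget, so a citation will not carry the weight. Fortunately the fact admits a two-line proof from Theorem \ref{th:optcond_one_sided}: the optimum can be written as $x_h=\min\left(M_h,\,A_h\sigma\right)$ where $\sigma>0$ solves $\sum_{h}\min\left(M_h,\,A_h\sigma\right)=n$, the left-hand side being continuous and nondecreasing in $\sigma$, so $\sigma$ and hence every $x_h$ is nondecreasing in $n$; this also yields the shrinkage of the take-max set that you invoke in the vertex case. Your deletion-invariance claim checks out directly (the reduced numerator equals $s_k$ times the reduced denominator, and the set equality defining $\U_k$ survives the removal of take-Neyman strata), and the edge cases ($r^*=1$, $\L^*=\H$) are handled trivially as in the paper. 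With that short monotonicity argument supplied, your proof is complete.
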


%%%
%%% Example of the RNABOX
%%%
\subsection{An example of performance of {\em RNABOX}} 
\label{sec:rnabox_example} 

We demonstrate the operational behaviour of {\em RNABOX} algorithm for an artificial population with 10 strata and for total 
sample size $n = 5110$, as shown in Table \ref{tab:rnabox_example}.
\begin{table}[ht]
    \caption{An example of {\em RNABOX} performance for a population with 10 strata and total sample size $n = 5110$. 
	Columns $\L_r/\U_r$, $r = 1, \ldots, 6$, represent the content of sets $\L,\, \U$ respectively, in the $r$-th iteration of the 
	{\em RNABOX} (between {\footnotesize Step \ref{alg:rnabox:L}} and {\footnotesize Step \ref{alg:rnabox:if}}): symbols 
	\tmin\hspace{0.1mm} or \tmax\hspace{0.1mm} indicate that the stratum with label $h$ is in $\L_r$ or $\U_r$, respectively.}
    \footnotesize
    \centering
    \label{tab:rnabox_example}
    \begin{tabular}[t]{| p{12mm} | r | r r || c c c c c c || S |}
        \hline
		$h$ & $A_h$ & $m_h$ & $M_h$ & $\L_1/\U_1$ & $\L_2/\U_2$ & $\L_3/\U_3$ & $\L_4/\U_4$ & $\L_5/\U_5$ & 
		$\L_6/\U_6$ & $\x^*$\\ 
		\hline
		1 	& 2700 & 750  & 900   & 		  & 		   & \tmin   & \tmin   & \tmin   & \tmin & 750\\
		2 	& 2000 & 450 & 500	 & \tmax & 		     & \tmin   & \tmin   & \tmin    & \tmin & 450\\
		3 	& 4200 & 250 & 300	 & \tmax & \tmax & \tmax  & \tmax  & \tmax  & 			& 261.08  \\
		4 	& 4400 & 350 & 400   & \tmax & \tmax & \tmax  & 		   &             & \tmin & 350 \\
		5 	& 3200 & 150  & 200   & \tmax & \tmax & \tmax  & \tmax  & \tmax	 & 			 & 198.92  \\
		6 	& 6000 & 550 & 600   & \tmax & \tmax & \tmax  & 		   & \tmin   & \tmin & 550 \\
		7 	& 8400 & 650 & 700    & \tmax & \tmax & \tmax & 		   & 			 & \tmin & 650 \\
		8	& 1900 & 50	   & 100    & \tmax & \tmax & \tmax & \tmax  & \tmax  & \tmax & 100 \\
		9 	& 5400 & 850 & 900   & \tmax & \tmax &  		  & \tmin  & \tmin    & \tmin & 850  \\
		10 & 2000 & 950 & 1000	&  		    & \tmin  & \tmin   & \tmin  & \tmin    & \tmin & 950\\
		\hline
		\multicolumn{2}{|l|}{SUM} & 5000 & 5600 & 0/8  & 1/7 & 3/6 & 4/3 & 5/3 & 7/1 &  5110 \\ 
		\hline
    \end{tabular}
\end{table}

For this example, {\em RNABOX} stops after $6$ iterations with {\em take-min} strata set $\L^* = \{1, 2, 4, 6, 7, 9, 10\}$, {\em 
take-max} strata set $\U^* = \{8\}$ and {\em take-Neyman} strata set $\H \setminus (\L^* \cup \U^*) = \{3, 5\}$ (see column 
$\L_6/\U_6$). The optimum allocation is a {\em regular} one and it is given in column $\x^*$ of Table 
\ref{tab:rnabox_example}. The corresponding value of the objective function \eqref{eq:prob_obj} is $441591.5$. The details
of interim allocations of strata to sets $\L,\, \U$ at each of $6$ iterations of the algorithm are given in columns $\L_1/\U_1$ -
$\L_6/\U_6$.

Results summarized in Table \ref{tab:rnabox_example} are presented graphically in Fig. \ref{fig:rnabox_choinka}.
\begin{figure}[ht]
    \caption{Assignments of strata into set $\L$ ({\em take-min}) and set $\U$ ({\em take-max}) in {\em RNABOX} algorithm for 
    an example of population as given in Table  \ref{tab:rnabox_example} and total sample size $n = 5110$. The $\sigma(\H)$ 
    axis corresponds to strata assigned to set $\U$, while $\tau(\H)$ is for $\L$. Squares represent assignments of strata to 
    $\L$ (\tmin) or $\U$ (\tmax) such that the coordinate corresponding to a given square is the value of the last element 
    (following the order of strata, $\sigma$ or $\tau$, associated to the respective axis) in the set.}
    \label{fig:rnabox_choinka}
    \begin{center}
        \includegraphics[scale=0.5]{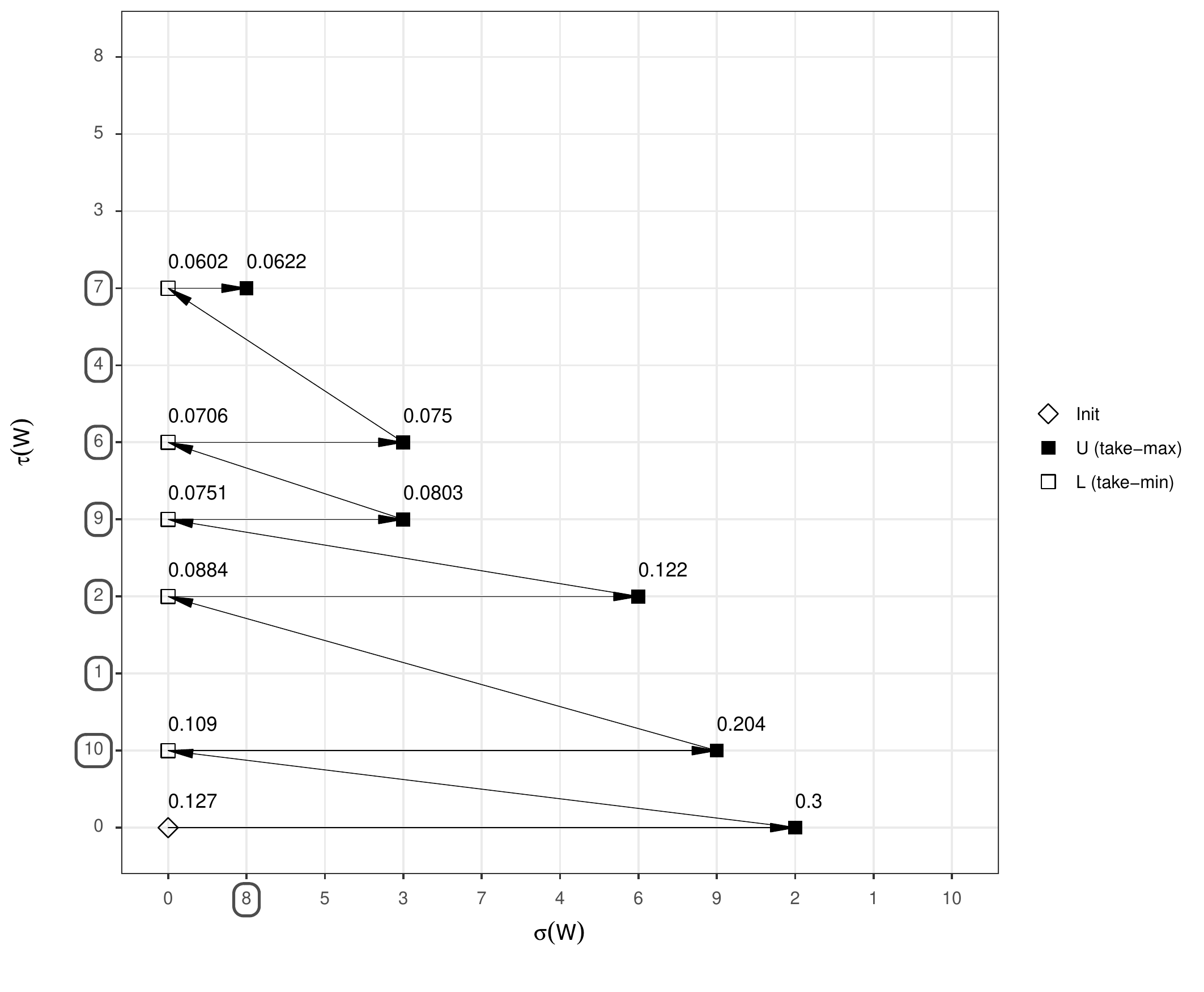}
    \end{center}
\end{figure}
The $\sigma(\H)$ axis corresponds to strata assigned to set $\U$, while $\tau(\H)$ is for $\L$. For the sake of clear 
presentation, the values on $\sigma(\H)$ axis are ordered non-increasingly in $\tfrac{A_h}{M_h}$, $h \in \H$, which gives 
$\tfrac{A_8}{M_8} \ge \tfrac{A_5}{M_5} \ge \tfrac{A_3}{M_3} \ge \ldots \ge \tfrac{A_{10}}{M_{10}}$, while for $\tau(\H)$ axis 
the strata are ordered non-decreasingly in $\tfrac{A_h}{m_h}$, $h \in \H$, which gives $\tfrac{A_{10}}{m_{10}} \le 
\tfrac{A_1}{m_1} \le \tfrac{A_2}{m_2} \le \ldots \le \tfrac{A_8}{m_8}$. Squares in Fig. \ref{fig:rnabox_choinka} represent 
assignments of strata into sets $\L$ or $\U$, such that the corresponding coordinate is the value of the last element (follow 
the respective axis) in the set. For example, \tmax\, at coordinate $6$ on $\sigma(\H)$ axis means that in this  iteration of the 
algorithm (iteration no. $3$) we have $\U = \{8, 5, 3, 7, 4, 6\}$, while \tmin\, at coordinate $6$ on $\tau(\H)$ axis means that 
in this iteration (no. $5$) we get $\L = \{10, 1, 2, 9, 6\}$. Directions indicated by the arrows correspond to the order in which 
the sets $\L$ and $\U$ are built as the algorithm iterates. Numbers above the squares are the values of set function $s(\L,\, 
\U)$ for  $\L$ and $\U$ with elements indicated by the coordinates of a given square. For example, for \tmax\, at coordinates 
$(8,\, 7)$, we have $s(\{10, 1, 2, 9, 6, 4, 7\},\, \{8\}) = 0.0622$.

We would like to point out that regardless of the population and other allocation parameters chosen, the graph illustrating the 
operations of {\em RNABOX} algorithm will always have a shape similar to that of the right half of a Christmas tree with the 
top cut off when $\U_{r^*} \neq \emptyset$. This property results from that fact that $\L_r \subsetneq \L_{r+1}$(see 
\eqref{eq:rnabox_Lsum}) and $\U_r \supseteq \U_{r+1}$ (see \eqref{eq:lem:U}) for $r = 1, \ldots, r^* - 1,\, r^* \geq 2$. For the 
population given in Table \ref{tab:rnabox_example}, we clearly see that $\L_1 = \emptyset \subsetneq \L_2 = \{10\} \subsetneq 
\L_3 = \{10, 1, 2\} \subsetneq \L_4 = \{10, 1, 2, 9\} \subsetneq \L_5 = \{10, 1, 2, 9, 6\} \subsetneq \L_6 = \L^* = \{10, 1, 2, 9, 6, 4, 
7\}$ and $\U_1 = \{8, 5, 3, 7, 4, 6, 9, 2\} \supset \U_2 = \{8, 5, 3, 7, 4, 6, 9\} \supset \U_3 = \{8, 5, 3, 7, 4, 6\} \supset \U_4 = 
\{8, 5, 3\}  = \U_5 \supset \U_6 = \U^* = \{8\}$. Moreover, regardless of the population and other allocation parameters 
chosen, subsequent values of set function $s$, which are placed at the ends of branches of the Christmas tree, above \tmax, 
form a non-increasing sequence while moving upwards. This fact follows directly from Lemma \ref{lem:s_in_rnabox}. For the 
example allocation illustrated in Fig. \ref{fig:rnabox_choinka}, we have $s(\L_1,\, \U_1) = 0.3 > s(\L_2,\, \U_2) = 0.204 > 
s(\L_3,\, \U_3) = 0.122 > s(\L_4,\, \U_4) = 0.0803 > s(\L_5,\, \U_5) = 0.075 > s(\L_6,\, \U_6) = 0.0622$ (note that $\L_1 = 
\emptyset$). The same property appears for values of $s$ related to the trunk of the Christmas tree, placed above \tmin. In 
this case, this property is due to \eqref{eq:lem:smono:1}, \eqref{eq:rem:rnabox_UUL:L} and \eqref{eq:rnabox_Lsum}. For 
allocation in Fig. \ref{fig:rnabox_choinka}, we have $s(\L_1,\, \emptyset) = 0.127 > s(\L_2,\, \emptyset) = 0.109 > s(\L_3,\, 
\emptyset) = 0.0884 > s(\L_4,\, \emptyset) = 0.0751 > s(\L_5,\, \emptyset) = 0.0706 > s(\L_6,\, \emptyset) = 0.0602$.

%%%%%%%%%%%%%%%%%%%%%%%%%%%%%%%%%%%%%%%%%%
%																																		%
%			RNABOX: modifications and improvements												     %
%																																		%
%%%%%%%%%%%%%%%%%%%%%%%%%%%%%%%%%%%%%%%%%%
\subsection{Possible modifications and improvements} 
\label{ssec:rnabox_improv}

%%%
%%% Alternative alrogithms for Step 2 or the RNABOX.
%%%
\subsubsection{Alternatives for {\em RNA} in Step 2} 

The {\em RNABOX} algorithm uses {\em RNA} in its {\footnotesize Step \ref{alg:rnabox:rna}}. However, it is not hard to see 
that any algorithm dedicated to Problem \ref{prob:upper} (like for instance {\em SGA} by \citealp{SG} or {\em COMA} by 
\citealp{WWW}) could be used instead. We chose  {\em RNA} as it allows to keep {\em RNABOX} free of any strata sorting.

%%%
%%% Algorithm: RNABOX 2, change L with U.
%%%
\subsubsection{A twin version of {\em RNABOX}} 

Let us observe that the order in which $\L$ and $\U$ are computed in the algorithm could be interchanged. Such a change, 
implies that the {\em RNA} used in {\footnotesize Step \ref{alg:rnabox:rna}} of the {\em RNABOX}, should be replaced by its 
twin version, the {\em LRNA}, that solves optimum allocation problem under one-sided lower bounds. The {\em LRNA} is 
described in details in \cite{WojciakLRNA}. 

\begin{algorithm}[ht]
	\caption{{\em LRNA}}
	\textbf{Input:} $\H,\, (A_h)_{h \in \H},\, (m_h)_{h \in \H},\, n$.
	\begin{algorithmic}[1]
		\Require $A_h > 0,\, m_h > 0,\, h \in \H$, $n \geq \sum_{h \in \H}\, m_h$.
		\State Let $\L = \emptyset$.
		\State\label{alg:lrna:step_R}Determine $\Lt = \left\{h \in \H \setminus \L:\, A_h\, s(\L,\, \emptyset) \leq m_h \right\}$,
		where set function $s$ is defined in \eqref{eq:s}.
		\State\label{alg:lrna:step_check} If {$\Lt = \emptyset$}, go to \footnotesize Step \ref{alg:lrna:step_return} \normalsize. 
		Otherwise, update $\L \gets \L \cup \Lt$  and go to \footnotesize Step \ref{alg:lrna:step_R} \normalsize.
		\State\label{alg:lrna:step_return}Return $\x^* = (x^*_h,\, h \in \H)$ with
		$x^*_h =
		\begin{cases}
			m_h,	& h \in \L \\
			A_h\, s(\L,\, \emptyset),	& h \in \H \setminus \L.
		\end{cases}
		$
	\end{algorithmic}
\end{algorithm}

Taking into account the observation above, {\footnotesize Step \ref{alg:rnabox:rna}} and {\footnotesize Step 
\ref{alg:rnabox:L}} of {\em RNABOX} would read: \\
\indent {\footnotesize Step 2:} ~Run {\em LRNA}[$\H,\, (A_h)_{h \in \H},\, (m_h)_{h \in \H},\, n$] to obtain $(x^{**}_h,\, h \in 
\H)$. \\
\indent \indent Let $\L = \{h \in \H:\, x^{**}_h = m_h\}$. \\
\indent  {\footnotesize Step 3:} ~Determine $\Ut = \left\{h \in \H \setminus \L:\, x^{**}_h \geq M_h \right\}$. \\
The remaining steps should be adjusted accordingly.

%%%
%%% RNA prior in Step 2 of the RNABOX.
%%%
\subsubsection{Using prior information in {\em RNA} at Step \ref{alg:rnabox:rna}} 

In view of Lemma \ref{lem:U}, using the notation introduced in Appendix \ref{app:notation}, in {\footnotesize Step 
\ref{alg:rnabox:rna}} of {\em RNABOX}, for $r^* \geq 2$ we have
\begin{equation*}
	\U_r = \{h \in \H \setminus \L_r:\, x^{**}_h = M_h\} \subseteq \U_{r-1}, \ind{r = 2, \ldots, r^*.}
\end{equation*}
This suggests that the domain of discourse for $\U_r$ could be shrunk from $\H \setminus \L_r$ to $\U_{r-1} \subseteq \H 
\setminus \L_r$, i.e.
\begin{equation}
	\U_r = \{h \in \U_{r-1}:\, x^{**}_h = M_h\}, \ind{r = 2, \ldots, r^*.}
\end{equation}
Given the above observation and the fact that from the implementation point of view set $\U_r$ is determined internally by 
{\em RNA}, it is tempting to consider modification of {\em RNA} such that it makes use of the domain of discourse 
$\U_{r-1}$ for set $\U_r$. This domain could be specified as an additional input parameter, say $\mathcal J \subseteq \H$, 
and then {\footnotesize Step \ref{alg:rna:U}} of {\em RNA} algorithm would read: \\
\indent {\footnotesize Step 2:} ~Determine $\Ut = \left\{h \in \mathcal J \setminus \U:\, A_h \, s(\emptyset,\, \U) \ge 
M_h\right\}$. \\
From {\em RNABOX} perspective, this new input parameter of {\em RNA} should be set to $\mathcal J = \H$ for the first 
iteration, and then $\mathcal J = \U_{r-1}$ for subsequent iterations $r = 2, \ldots, r^* \geq 2$ (if any).

%%%
%%% Naive RNABOX
%%%
\subsection{On a naive extension of the one-sided {\em RNA}} 
\label{sec:naive_rnabox}
By the analogy to the one-sided constraint case, one would expect that the optimum allocation problem under two-sided 
constraints \eqref{eq:prob_cnst_ineq} is of the form \eqref{eq:LU} and hence it could be solved with the naive modification of 
{\em RNA} as defined below.
\begin{algorithm}[H]
    \caption{Naive modification of {\em RNA}}
    \label{alg:naive_mod_RNA}
    \textbf{Input:} $\H,\, (A_h)_{h \in \H},\, (m_h)_{h \in \H},\, (M_h)_{h \in \H},\, n$.
    \begin{algorithmic}[1]
		\Require $A_h > 0,\, 0 < m_h < M_h,\, h \in \H,\, \sum_{h \in \H} m_h \leq n \leq \sum_{h \in \H} M_h$.
		\State\label{alg:naive_rnabox:LU_init}Set $\L = \emptyset,\, \U = \emptyset$.
		\State\label{alg:naive_rnabox:U}Determine $\Ut = \{h \in \H \setminus (\L \cup \U):\, A_h\, s(\L,\, \U) \geq M_h\}$, where 
		set function $s$ is defined in \eqref{eq:s}.
		\State\label{alg:naive_rnabox:L}Determine $\Lt = \{h \in \H \setminus (\L \cup \U):\, A_h\, s(\L,\, \U) \leq m_h\}$.
		\State\label{alg:naive_rnabox:if}If {$\Lt \cup \Ut =\emptyset$} go to {\footnotesize Step \ref{alg:naive_rnabox:ret}}. 
		Otherwise, update $\L \gets \L \cup \Lt,\, \U \gets \U \cup \Ut$ and go to {\footnotesize Step \ref{alg:naive_rnabox:U}}.
		\State\label{alg:naive_rnabox:ret}Return $\x^* = (x^*_h,\, h \in \H)$ with
		$
		x^*_h = 
		\begin{cases}
			m_h,		  			& ~ h \in \L \\
			M_h,		  			& ~h \in \U \\
			A_h\, s(\L,\, \U),	& ~ h \in \H \setminus (\L \cup \U).
		\end{cases}
		$
    \end{algorithmic}
\end{algorithm}
This is however not true and the basic reason behind is that sequence $\left(s(\L_r,\, \U_r)\right)_{r = 1}^{r^*}$ (where $r$ 
denotes iteration index, and $\L_r$ and $\U_r$ are taken after {\footnotesize Step \ref{alg:naive_rnabox:L}} and before 
{\footnotesize Step \ref{alg:naive_rnabox:if}}), typically fluctuates. Thus, it may happen that \eqref{eq:rem:rnabox_UUL:U} - 
\eqref{eq:rem:rnabox_UUL:L} are not met for $\U_r,\, \Lt_r$ defined by the above naive algorithm for some $r = 1, \ldots, r ^* 
\geq 1$. Consequently, optimality conditions stated in Theorem \ref{th:optcond} may not hold. To illustrate this fact, consider 
an example of population as given in Table \ref{tab:rna_naive_example}.
\begin{table}[ht]
    \caption{Example population with 5 strata and the results of naive recursive Neyman sample allocation (columns $\x$ and 
    $\L/\U/\H \setminus (\L \cup \U)$) for total sample size $n = 1489$. Lower and upper bounds imposed on strata sample 
    sizes are given in columns $m_h$ and $M_h$ respectively. The optimum allocation with corresponding strata sets are 
    given in the last two columns. Squares in cells indicate that the allocation for stratum $h \in \H = \{1,\, \ldots,\ 5\}$ is of type 
    {\em take-min} (\tmin) or {\em take-max} (\tmax).}
    \footnotesize
    \centering
    \label{tab:rna_naive_example}
    \begin{tabular}[t]{ p{12mm} r r r || S c | S c}
		\hline
		$h$ & $A_h$ & $m_h$ & $M_h$ & $\x$ & $\L/\U/\H \setminus (\L \cup \U)$ & $\x^*$ & $\L^*/\U^*/\H \setminus (\L^* 
		\cup \U^*)$ \\
		\hline
		1 & 420     & 24    & 420    & 30     &  		  &  54.44 &  \\
		2 & 352    & 15     & 88      & 88     & \tmax & 45.63  & \\
		3 & 2689 & 1344 & 2689 & 1344 & \tmin  & 1344    & \tmin \\
		4 & 308   & 8       & 308    & 22     &  		   &  39.93 &  \\
		5 & 130    & 3       & 5         & 5       & \tmax & 5          & \tmax  \\ 
		\hline
		SUM &  & 1394 & 3510 & 1489 & 1/2/2 & 1489 & 1/1/3 \\ 
		\hline
    \end{tabular}
\end{table}
For this population, a naive version of the recursive Neyman stops in the second iteration and it gives allocation $\x$ with $\L 
= \{3\}$ and $\U = \{2, 5\}$, while the optimum choice is $\x^*$ with $\L^* = \{3\}$ and $\U^* = \{5\}$. For non-optimum 
allocation $\x$, we have $s(\{3\},\, \{2, 5\}) = 0.0714 \ngeq 0.25 = \tfrac{A_2}{M_2}$, i.e. $\U$ does not follow 
\eqref{eq:optcond}. Note that the allocation $\x$ is feasible in this case (it may not be in general), yet is not a minimizer of the 
objective function \eqref{eq:prob_obj} as $f(\x) = 20360 - B$, and $f(\x^*) = 17091 - B$, where $B$ is some constant. Thus, 
it appears that in the case of two-sided constraints, one needs a more subtle modification of {\em RNA} procedure that lies in 
a proper update of the pair $(\L_r,\, \U_r)$ in each iteration $r = 1, \ldots, r^* \geq 1$ of the algorithm, such as e.g. in 
{\em RNABOX}.

%%%%%%%%%%%%%%%%%%%%%%%%%%%%%%%%%%%%%%%%%%
%																																		%
%			 Numerical results																							   %
%																																		%
%%%%%%%%%%%%%%%%%%%%%%%%%%%%%%%%%%%%%%%%%%
\section{Numerical results} 
\label{sec:numerical_results}

In simulations, using \proglang{R} Statistical Software \citep{R} and {\tt microbenchmark} \proglang{R} package 
\citep{microbenchmark}, we compared the computational efficiency of {\em RNABOX} algorithm with the efficiency of the 
{\em fixed-point iteration} algorithm ({\em FPIA}) of \citet{MSW}. The latter one is known to be an efficient algorithm 
dedicated to Problem \ref{prob} and therefore we used it as a benchmark. The comparison was not intended to verify 
theoretical computational complexity, but was rather concerned with quantitative results regarding computational efficiency 
for the specific implementations of both algorithms.

To compare the performance of the algorithms we used the {\em STSI} sampling for several artificially created populations. 
Here, we chose to report on simulation for two such populations with 691 and 703 strata, results of which are representative 
for the remaining ones. These two populations were constructed by iteratively binding $K$ sets of numbers, where $K$ 
equals 100 (for the first population) and 200 (for the second population). Each set, labelled by $i = 1, \dots, K$, contains 
10000 random numbers generated independently from log-normal distribution with parameters $\mu = 0$ and $\sigma = \log 
(1+i)$. For every set $i = 1, \dots, K$, strata boundaries were determined by the geometric stratification method of \citet{GH} 
with parameter 10 being the number of  strata and targeted coefficient of variation equal to 0.05. This stratification method is 
implemented in the \proglang{R} package {\tt stratification}, developed by \citet{stratification} and described in 
\citet{Baillargeon}. For more details, see the \proglang{R} code with the experiments, which is placed in our GitHub repository 
\citep[see][]{rnabox_perf}.

Results of these simulations are illustrated in Fig. \ref{fig_time_comparisons}. 
\begin{figure}[ht]
    \caption{Running times of {\em FPIA} and {\em RNABOX} for two artificial populations. Top graphs show the empirical 
    median of execution times (calculated from 100 repetitions) for different total sample sizes. Numbers in brackets are the 
    numbers of iterations of a given algorithm. In the case of {\em RNABOX}, it is a vector with number of iterations of the {\em 
    RNA} (see {\footnotesize Step \ref{alg:rnabox:rna}} of {\em RNABOX}) for each iteration of {\em RNABOX}. Thus, 
    the length of this vector is equal to the number of iterations of {\em RNABOX}. Counts of {\em take-min}, {\em 
    take-Neyman}, and {\em take-max} strata are shown on bottom graphs.}
    \label{fig_time_comparisons} 
    \begin{center}
		\includegraphics[scale=0.54]{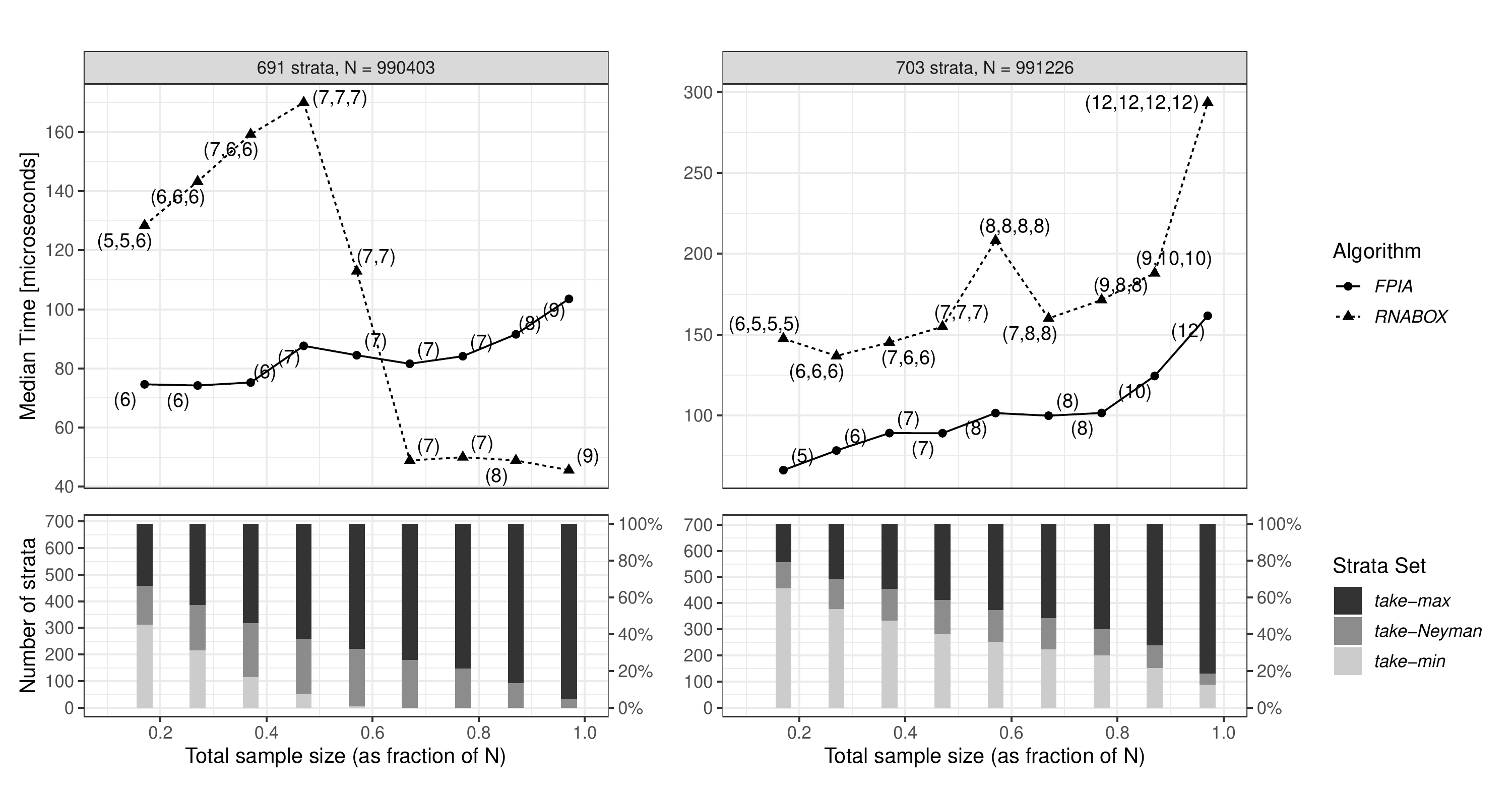}
    \end{center}
\end{figure}
From Fig. \ref{fig_time_comparisons} we see that, while for majority of the cases the {\em FPIA} is slightly faster than {\em 
RNABOX}, the running times of both of these algorithms are generally comparable. The gain in the execution time of the 
{\em FPIA} results from the fact that it typically runs through a smaller number of sets $\L,\, \U \subset \H$ (according to 
\eqref{eq:fpia_crucialstep}), than RNABOX in order to find the optimal $\L^*$ and $\U^*$. Although this approach usually 
gives correct results (as in the simulations reported in this section), it may happen that the {\em FPIA} misses the optimal sets 
$\L^*,\, \U^* \subsetneq \H$. In such a case, {\em FPIA} does not converge, as explained in Appendix \ref{app:msw}. 
Nevertheless, we emphasize that this situation rarely happens in practice.

%%%%%%%%%%%%%%%%%%%%%%%%%%%%%%%%%%%%%%%%%%
%																																		%
%			 Conclusions																									   %
%																																		%
%%%%%%%%%%%%%%%%%%%%%%%%%%%%%%%%%%%%%%%%%%
\section{Concluding comments} 
\label{sec:conclusions}

In this paper we considered Problem \ref{prob} of optimum sample allocation under box constraints. The main result of this 
work is the mathematically precise formulation of necessary and sufficient conditions for the solution to Problem 
\ref{prob}, given in Theorem \ref{th:optcond}, as well as the development of the new recursive algorithm, termed {\em 
RNABOX}, that solves Problem \ref{prob}. The optimality conditions are fundamental to analysis of the optimization problem. 
They constitute trustworthy underlay for  development of effective algorithms and can be used as a baseline for any future 
search of new algorithms solving Problem \ref{prob}. Essential properties of {\em RNABOX} algorithm, that distinguish it 
from other existing algorithms and approaches to the Problem \ref{prob}, are:

\begin{enumerate}[nosep, labelsep=5pt]
    \item Universality: {\em RNABOX} provides optimal solution to every instance of feasible Problem \ref{prob} (including the case of a {\em vertex} optimum allocation).
    \item No initialization issues: {\em RNABOX} does not require any initializations, pre-tests or whatsoever that could have an 
    impact on the final results of the algorithm. This, in turn, takes places e.g. in case of NLP methods.
    \item No sorting: {\em RNABOX} does not perform any ordering of strata.
    \item Computational efficiency: {\em RNABOX} running time is comparable to that of {\em FPIA} (which is probably the 
    fastest previously known optimum allocation algorithm for the problem considered).
    \item Directness: {\em RNABOX} computes important quantities (including {\em RNA} internals) via formulas that are 
    expressed directly in terms of the allocation vector $\x^{(\L,\, \U)}$ (see Definition \ref{def:LU}). This 
    reduces the risk of finite precision arithmetic issues, comparing to the algorithms that base their key operations on some 
    interim variables on which the optimum allocation depends, as is the case of e.g. the NLP-based method.
    \item Recursive nature: {\em RNABOX} repeatedly applies allocation {\footnotesize Step \ref{alg:rnabox:rna}} and 
    {\footnotesize Step \ref{alg:rnabox:L}} to step-wise reduced set of strata, i.e. "smaller" versions of the same problem. This 
    translates to clarity of the routines and a natural way of thinking about the allocation problem.
    \item Generalization: {\em RNABOX}, from the perspective of its construction, is a generalization of the popular {\em RNA} 
    algorithm that solves Problem \ref{prob:upper} of optimum sample allocation under one-sided bounds on sample strata 
    sizes.
\end{enumerate}

\bigskip
Finally, we would like to note that Problem \ref{prob} considered in this paper is not an integer-valued allocation problem, 
while the sample sizes in strata should be of course natural numbers. On the other hand, the integer-valued optimum 
allocation algorithms are relatively slow and hence might be inefficient in some applications, as already noted in Section 
\ref{sec:motivation}. If the speed of an algorithm is of concern and non-necessarily integer-valued allocation algorithm is 
chosen (e.g. {\em RNABOX}), the natural remedy is to round the non-integer optimum allocation provided by that algorithm. 
Altogether, such procedure is  still much faster than integer-valued allocation algorithms. However, a simple rounding of the 
non-integer solution does not, in general, yield the minimum of the objective function, and may even lead to an infeasible 
solution, as noted in \citet[Section 1, p.~3]{Friedrich}. Since infeasibility can in fact arise only from violating constraint 
\eqref{eq:prob_cnst_eq}, it can be easily avoided by using a rounding method of \citet{Cont} that preserves the integer sum 
of positive numbers. Moreover, all numerical experiments that we carried out, show that the values of the objective function 
obtained for non-integer optimum allocation before and after rounding and for the integer optimum allocation are practically 
indistinguishable. See Table \ref{tab:varround} for the exact numbers. 

\begin{table}[ht]
    \caption{For populations used in simulations in Section \ref{sec:numerical_results}, $V_{int}$ and $V$ denote values of 
    variance \eqref{eq:var} computed for integer optimum allocations and non-integer optimum allocations, respectively. 
    Variances $V_{round}$ are computed for rounded non-integer optimum allocations (with the rounding method of 
    \citet{Cont}).}
    \footnotesize
    \centering
    \label{tab:varround}
    \begin{tabular}[t]{ | r | r | r | r | r | r | r |}
		\hline
		& \multicolumn{3}{c|}{$691 \text{ strata } ~ (N = 990403)$} & \multicolumn{3}{c|}{$703 \text{ strata } ~ (N = 991226)$} \\
		\hline
		fraction f & $n = \text{ f } \cdot N$ & $V/V_{int}$ & $V_{round}/V_{int}$ & $n = \text{ f } \cdot N$ & 
		$V/V_{int}$ & $V_{round}/V_{int}$ \\ \hline
		0.1 & 99040 & 0.999997 & 1.000000 & 99123 & 0.999997 & 1.00000 \\ \hline
		0.2 & 198081 & 0.999999 & 1.000000 & 198245 & 0.999999 & 1.00000 \\ \hline
		0.3 & 297121 & 0.999999 & 1.000000 & 297368 & 0.999999 & 1.00000 \\ \hline
		0.4 & 396161 & 0.999999 & 1.000000 & 396490 & 0.999999 & 1.00000 \\ \hline
		0.5 & 495202 & 0.999999 & 1.000000 & 495613 & 0.999999 & 1.00000 \\ \hline
		0.6 & 594242 & 0.999999 & 1.000000 & 594736 & 0.999999 & 1.00000 \\ \hline
		0.7 & 693282 & 1.000000 & 1.000000 & 693858 & 0.999998 & 1.00000 \\ \hline
		0.8 & 792322 & 1.000000 & 1.000000 & 792981 & 0.999995 & 1.00000 \\ \hline
		0.9 & 891363 & 1.000000 & 1.000000 & 892103 & 0.999759 & 1.00000 \\ \hline
    \end{tabular}
\end{table}

The above observations suggest that fast, not-necessarily integer-valued allocation algorithms, with properly rounded 
results, may be a good and reasonable alternative to slower integer algorithms when speed of an algorithm is crucial.

%%%%%%%%%%%%%%%%%%%%%%%%%%%%%%%%%%%%%%%%%%
%																																		 %
%													APPENDIX																   %
%																																		 %
%%%%%%%%%%%%%%%%%%%%%%%%%%%%%%%%%%%%%%%%%%
\appendix

%%%%%%%%%%%%%%%%%%%%%%%%%%%%%%%%%%%%%%%%%%
%																																		%
%			APPENDIX: GGM and MSW																				%
%																																		%
%%%%%%%%%%%%%%%%%%%%%%%%%%%%%%%%%%%%%%%%%%
\section{Appendix: On the existing allocation algorithms by GGM and MSW}
\label{app:ggm_msw}

%%%
%%% GGM
%%%
\subsection{Necessary vs. sufficient optimality conditions and the {\em noptcond} function of GGM} 
\label{app:ggm}

Problem \ref{prob} has been considered in \citet{GGM} (GGM in the sequel). Theorem 1 in that paper announces that there 
exist disjoint sets $\L,\, \U \subset \H$ with $\L^c = \H \setminus \L,\, \U^c = \H \setminus \U$, such that
\begin{equation}
    \label{eq:ggm:noptcond_LU}
    \max_{h \in \L} \tfrac{A_h}{m_h} < \min_{h \in \L^c} \tfrac{A_h}{m_h} 
    \qquad\mbox{and}\qquad
    \max_{h \in \U^c} \tfrac{A_h}{M_h} < \min_{h \in \U} \tfrac{A_h}{M_h},
\end{equation}
and, if the allocation $\x = (x_h,\, h \in \H)$ is of the form \eqref{eq:LU} with $\L$ and $\U$ as above, then $\x$ is an optimum 
allocation. An explanation on how to construct sets $\L$ and $\U$ is essentially given in the first part of the proof of Theorem 
1, p. 154-155, or it can be read out from the {\em noptcond} function. The {\em noptcond} is a function in \proglang{R} 
language \citep[see][]{R} that was defined in Sec. 3 of GGM and it aims to find an optimum sample allocation based on results 
of Theorem 1, as the authors pointed out. However, as it will be explained here, \eqref{eq:LU} and 
\eqref{eq:ggm:noptcond_LU} together, are in fact necessary but not sufficient conditions for optimality of the allocation $\x$. 
In a consequence, the \emph{noptcond} function may not give an optimum allocation.

By Theorem \ref{th:optcond}, CASE I, a {\em regular} allocation $\x$ is an optimum allocation if and only if it is of the form \eqref{eq:LU} with disjoint sets $\L,\, \U \subsetneq \H$, $\L \cup \U \subsetneq \H$, $\L^c = \H \setminus \L,\, \U^c = \H 
\setminus \U$ such that the following inequalities hold true:
\begin{gather}
    \max_{h \in \L} \tfrac{A_h}{m_h} \leq \tfrac{1}{s(\L,\, \U)} < \min_{h \in \L^c} \tfrac{A_h}{m_h}, 
    \label{eq:ggm:true_optcond_L} \\
    \max_{h \in \U^c} \tfrac{A_h}{M_h} < \tfrac{1}{s(\L,\, \U)} \leq \min_{h \in \U} \tfrac{A_h}{M_h}. 
    \label{eq:ggm:true_optcond_U}
\end{gather}
Clearly, \eqref{eq:ggm:true_optcond_L} implies the first inequality in \eqref{eq:ggm:noptcond_LU}, and \eqref{eq:ggm:true_optcond_U} implies the second one. Converse implications do not necessary hold, and therefore sufficiency of condition \eqref{eq:LU} with \eqref{eq:ggm:noptcond_LU} for Problem \ref{prob} is not guaranteed. In others words, \eqref{eq:LU} and \eqref{eq:ggm:noptcond_LU} alone, lead to a feasible solution (i.e. the one that does not violate any 
of the constraints \eqref{eq:prob_cnst_eq} - \eqref{eq:prob_cnst_ineq}), which, at the same time, might not be a minimizer of the objective function \eqref{eq:prob_obj}.

The proof of Theorem 1 in GGM gives an explicit, but somewhat informal algorithm that finds sets $\L$ and $\U$ which define allocation \eqref{eq:LU}. Sets $\L$ and $\U$ determined by this algorithm meet inequalities (4) and (5) in GGM, which are 
necessary and sufficient conditions for an optimal ({\em regular}) solution. This fact was noted by the authors, but unfortunately (4) and (5) are not given in the formulation of Theorem 1. Presumably, this was due to authors' statement that 
(4) and (5) follow from \eqref{eq:LU} and \eqref{eq:ggm:noptcond_LU}: {\it "From the definition of L1, L2 in Eq. 3 we have ..."}, while in fact to conclude (4) and (5) the authors additionally rely on the algorithm constructed at the beginning of the proof. Note that conditions (4) and (5) are the same as those given in Theorem \ref{th:optcond} for a {\em regular} solution (CASE I). Moreover, it is a matter of simple observation to see that the allocation found by the algorithm described in the proof of Theorem 1 in GGM, meets also optimality conditions \eqref{eq:optcond_vertex} and \eqref{eq:optcond_vertex_n} established in Theorem \ref{th:optcond} for {\em vertex} solutions (CASE II). Hence, the allocation  computed by the algorithm embedded in the proof of the GGM's Theorem 1 is indeed an optimal one, though the formal statement of Theorem 1 is not correct. 

Unfortunately, the  {\em noptcond} function does not fully follow the algorithm from the proof of Theorem 1 of GGM. Consequently, the optimality of a solution computed by {\em noptcond} is not guaranteed. This can be illustrated by a simple 
numerical Example \ref{example:noptcond_wrong}, which follows \citet[Example 3.9]{WojciakMSC}.

\begin{example}
    \label{example:noptcond_wrong}
    Consider the allocation for an example population as given in Table \ref{tab:ggm}.
    \begin{table}[H]
        \caption{Two allocations for an example population with two strata: non-optimum $\x^{noptcond}$ with $\L = \{1\},\, \U = 
        \emptyset$, and optimum $\x^*$ with $\L^* = \emptyset,\, \U^* = \{1\}$. Set function $s$ is defined in \eqref{eq:s}.}
	\footnotesize
	\centering
	\label{tab:ggm}
 	\begin{tabular}[t]{| l | r | r r | S S || c c | S S |}
		\hline
		$h$ & $A_h$ & $m_h$ & $M_h$ & $\tfrac{A_h}{m_h}$ & $\tfrac{A_h}{M_h}$ & $s^{-1}(\{1\}, \emptyset)$ & $s^{-1}(\emptyset, \{1\})$ & $\x^{noptcond}$ & $\x^*$ \\
		\hline
		1 & 2000 & 30 & 50 & 66.67 & 40 & \multirow{2}{*}{23.08} & \multirow{2}{*}{27.27} & 30 & 50\\
		2 & 3000 & 40 & 200 & 75 & 15 & & & 130 & 110 \\ 
		\hline
	\end{tabular}
    \end{table}	
    The solution returned by {\em noptcond} function is equal to $\x = (30,\, 130)$, while the optimum allocation is $\x^* = (50, 
    110)$. The reason for this is that conditions \eqref{eq:ggm:true_optcond_L}, \eqref{eq:ggm:true_optcond_U} are never 
    examined by the {\em noptcond}, and for this particular example we have $\tfrac{A_1}{m_1} = 66.67 \nleq 23.08 = 
    \tfrac{1}{s(\L,\, \U)}$ as well as $\tfrac{A_1}{M_1} = 40 \nless 23.08 = \tfrac{1}{s(\L,\, \U)}$, i.e. 
    \eqref{eq:ggm:true_optcond_L}, \eqref{eq:ggm:true_optcond_U} are clearly not met. A simple adjustment can be made to 
    {\em noptcond} function so that it provides the optimal solution to Problem \ref{prob}. That is, a feasible candidate solution 
    that is found (note that this candidate is of the form $\x = (x_v,\, v \in \H \setminus (\L \cup \U))$ with $x_v = A_v\, s(\L,\, 
    \U)$), should additionally be checked against the condition
    \begin{equation}
		\max_{h \in \L} \tfrac{A_h}{m_h} \leq \tfrac{A_v}{x_v} \leq \min_{h \in \U} \tfrac{A_h}{M_h}, \ind{v \in \H \setminus (\L \cup 
		\U).}
    \end{equation}
\end{example} 

We finally note that the table given in Section 4 of GGM on p. 160, that illustrates how {\em noptcond} operates, is incorrect as some of the numbers given do not correspond to what {\em noptcond} internally computes, e.g. for each row of this table, the allocation should sum up to $n = 20$, but it does not.

%%%
%%% MSW
%%%
\subsection{Fixed-point iteration of MSW} 
\label{app:msw}

Theorem 1 of \citet{MSW} (referred to by MSW in the sequel) provides another solution to Problem \ref{prob} that is based on 
results partially similar to those stated in Theorem \ref{th:optcond}. Specifically, it states that the allocation vector $\x = 
(x_h,\, h \in \H)$ can be expressed as a function $\x:\, \R_+ \to \R_+^{\card{\H}}$ of $\lambda$, defined as follows
\begin{equation*}
    x_h(\lambda) =
    \begin{cases}
		M_h,										&\qquad{h \in J^{\lambda}_M := \left\{h \in \H:\, \lambda \le \tfrac{A_h^2}{M_h^2}\right\}} \\
		\tfrac{A_h}{\sqrt{\lambda}}, &\qquad{h \in J^{\lambda} := \left\{h \in \H:\, \tfrac{A_h^2}{M_h^2} < \lambda < 
		\tfrac{A_h^2}{m_h^2}\right\}} \\
		m_h,										&\qquad{h \in J^{\lambda}_m := \left\{h \in \H:\, \lambda \ge \tfrac{A_h^2}{m_h^2} \right\},}
    \end{cases}
\end{equation*}	
and the optimum allocation $\x^*$ is obtained for $\lambda = \lambda^*$, where $\lambda^*$ is the solution of the equation:
\begin{equation}
    \label{eq:msw:g}
    {\tilde{g}}(\lambda) := \sum_{h \in \H}\, x_h(\lambda) - n  = 0.
\end{equation}
Here, we note that the optimum allocation $\x^*$ obtained is of the form $\x^{(\L,\, \U)}$, as given in \eqref{eq:LU}, with $\L = 
J^{\lambda^*}_m$ and $\U = J^{\lambda^*}_M$. Since $\tilde{g}$ is continuous but not differentiable, equation 
\eqref{eq:msw:g} has to be solved by a root-finding methods which only require continuity. In MSW, the following methods were proposed: {\em bisection}, {\em secan} and {\em regula falsi} in few different versions. The authors reported 
that despite the fact that some theoretical requirements for these methods are not satisfied by the function $\tilde{g}$ (e.g. it 
is not convex while {\em regula falsi} applies to convex functions), typically the numerical results are satisfactory. However 
this is only true when the values of the initial parameters are from the proper range, which is not known a priori. 

As these algorithms might be relatively slow (see Table 1 in MSW), the authors of MSW proposed an alternative method. It 
is based on the observation that \eqref{eq:msw:g} is equivalent to
\begin{equation*}
    \lambda = \phi(\lambda) := \tfrac{1}{s^2(J_m^{\lambda},\,J^{\lambda}_M)},
\end{equation*}
where set function $s$ is as in \eqref{eq:s} (see Appendix A, Lemma 2 in MSW). Note that $\phi$ is well-defined only if 
$J_m^{\lambda} \cup J_M^{\lambda} \subsetneq \H$. The authors note that the optimal $\lambda^*$ is a fixed-point of 
the function $\phi: \R_+ \to \R_+$. This clever observation was translated to an efficient optimum allocation {\em fixed-point 
iteration} algorithm, {\em FPIA}, that was defined on page 442 in MSW. Having acknowledged its 
computational efficiency as well as the fact that typically it gives the correct optimum allocation, the following two minor 
issues related to {\em FPIA} can be noted:

\begin{itemize}[nosep, labelsep=5pt]
    \item The {\em FPIA} is adequate only for allocation problems for which an optimum allocation is of a {\em regular} type 
    (according to Definition \ref{def:vertex_regular}), as for {\em vertex} allocation $J_m^{\lambda^*} \cup J_M^{\lambda^*} = 
    \H$ and therefore $\phi$ is not well-defined for such $\lambda^*$. 
    \item The {\em FPIA} strongly depends on the choice of initial value $\lambda_0$ of the parameter $\lambda$. If 
    $\lambda_0$ is not chosen from the proper range (not known a priori), two distinct undesirable scenarios can occur: the 
    {\em FPIA} may get blocked, or it may not converge. 
\end{itemize}

To outline the blocking scenario which may happen even in case of a {\em regular} optimum allocation, let
\begin{equation*}
	I := \left\{\lambda \in \left[\min_{h \in \H} \tfrac{A^2_h}{m^2_h},\, \max_{h \in \H} \tfrac{A^2_h}{M^2_h}\right]:\, 
	s(J_m^{\lambda},\, J^{\lambda}_M) = 0\right\},
\end{equation*}
and note that $I \neq \emptyset$ is possible. If {\em FPIA} encounters $\lambda_k \in I \neq \emptyset$ at some $k = 0, 1, 
\ldots$, then the crucial step of {\em FPIA}, i.e.:
\begin{equation}
    \label{eq:fpia_crucialstep}
    \lambda_{k+1} := \tfrac{1}{s^2\left(J_m^{\lambda_k},\,J^{\lambda_k}_M\right)},
\end{equation}
yields $\lambda_{k+1} = \tfrac{1}{0}$, which is undefined. Numerical Example \ref{ex:fpia_blocked} illustrates this scenario.
\begin{example}
    \label{ex:fpia_blocked}
    Consider allocation for a population given in Table \ref{tab:fpia_blocked}. Initial value $\lambda_0 =6861.36 \in I = [2304,\, 
    6922.24]$ and therefore {\em FPIA} gets blocked since $\lambda_1 = \tfrac{1}{0}$ is undefined.
    \begin{table}[H]
        \caption{Details of {\em FPIA} performance for a population with four strata. Here, $\lambda_0 = 
        \tfrac{1}{s^2(\emptyset,\, 
        \emptyset)},\, \lambda_1 = \tfrac{1}{s^2(J_m^{\lambda_0},\, J_M^{\lambda_0})},\, \lambda^* = 
        \tfrac{1}{s^2(J_m^{\lambda^*},\, J_M^{\lambda^*})}$. Squares in cells indicate assignment of stratum $h \in \H = \{1, 2, 
        3, 4\}$ to $J_m^{\lambda}$ (\tmin) or $J_M^{\lambda}$ (\tmax) for a given $\lambda$.}
		\footnotesize
		\centering
		\label{tab:fpia_blocked}
		\begin{tabular}[t]{| l | r | r r | r r || c | c || c | S |}
			\hline
			\multicolumn{6}{|c||}{population} & \multicolumn{2}{c||}{{\em FPIA}} & \multicolumn{2}{c|}{optimum allocation} \\
			\hline
			$h$ & $A_h$ & $m_h$ & $M_h$ & $\tfrac{A_h^2}{m_h^2}$ & $\tfrac{A_h^2}{M_h^2}$ &
			$J_m^{\lambda_0} / J_M^{\lambda_0}$ & $J_m^{\lambda_1} / J_M^{\lambda_1}$ &
			$J_m^{\lambda^*} / J_M^{\lambda^*}$ & $\x^*$ \\
			\hline
			1 & 4160 & 5 & 50 & 692224 & 6922.24 & \tmax & \multirow{4}{*}{-} & & 44.35 \\
			2 & 240 & 5 & 50 & 2304 & 23.04 & \tmin & & \tmin & 5 \\ 
			3 & 530 & 5 & 50 & 11236 & 112.36 & & & & 5.65 \\ 
			4 & 40 & 5 & 50 & 64 & 0.64 & \tmin & & \tmin & 5 \\
			\hline
			\multicolumn{6}{|l||}{total sample size $n = 60$} & $\lambda_0 = 6861.36$ & $\lambda_1 = \tfrac{1}{0}$ & $\lambda^* 
			= 8798.44$ & \multicolumn{1}{c}{} \\
			\cline{1-9}			
		\end{tabular}
    \end{table}

    Figure \ref{fig:fpia_blocked} shows the graphs of the functions $g$ and $\phi$ for the problem considered in this example.
    \begin{figure}[H]
		\caption{Functions $g$ and $\phi$ for Example \ref{ex:fpia_blocked} of the allocation problem for which the {\em FPIA} 
		gets blocked.}
		\label{fig:fpia_blocked}
	\begin{center}
		\includegraphics[height=70mm,width=140mm]{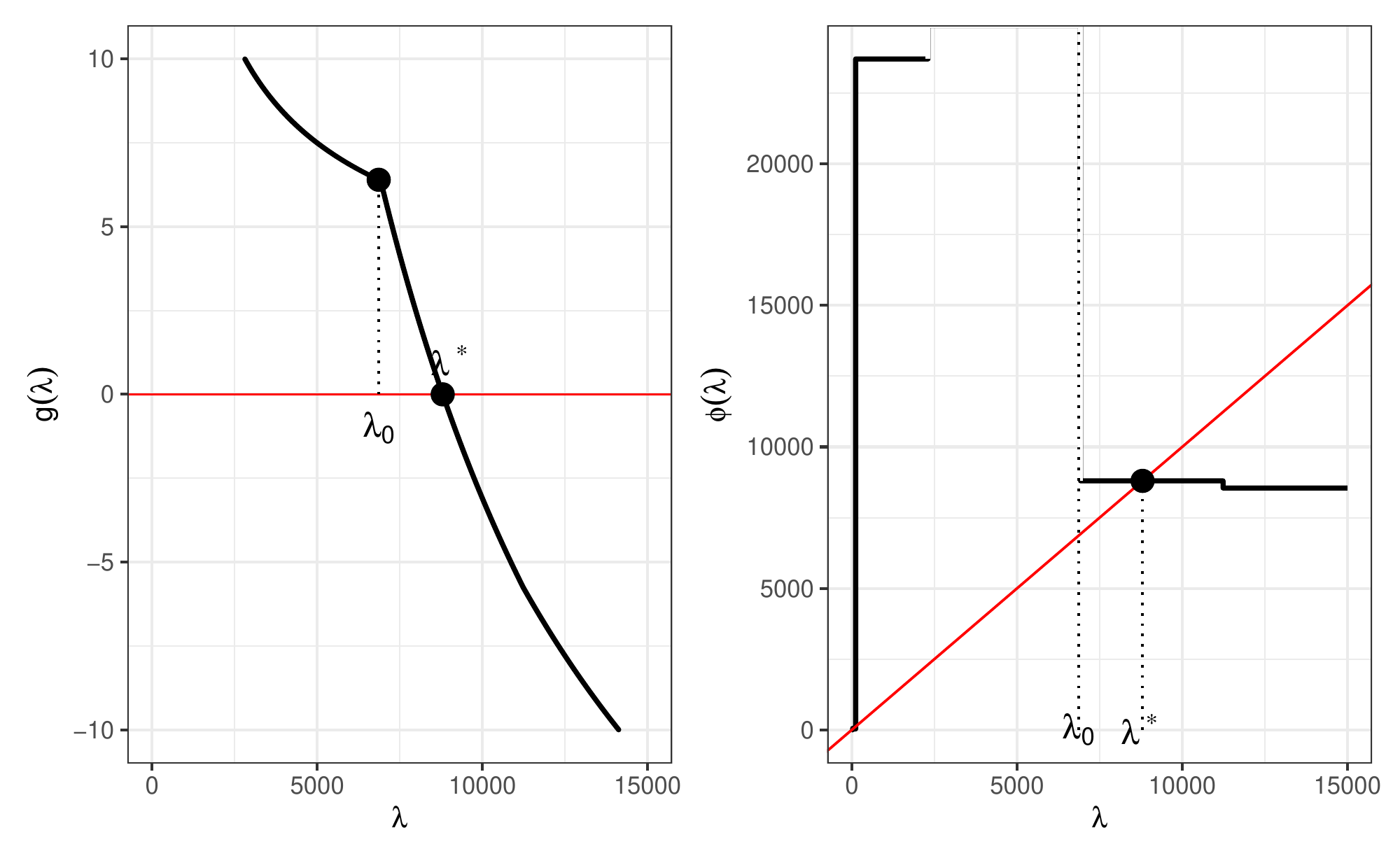}
	\end{center}
    \end{figure}
    These graphs show that $g$ is not differentiable, and $\phi$ has jump discontinuities. Moreover, there is an interval $I = 
    [2304,\,6922.24]$, such that $\phi$ is not well-defined on $I$.
\end{example}

We note that a simple remedy to avoid the blocking scenario in the case of a {\em regular} allocation is to change $\lambda$ 
to $\lambda^{new} = \tfrac{1}{\lambda}$, with a corresponding redefinition of all the objects in the original algorithm.

The next Example \ref{ex:fpia_oscillations} illustrates a situation when {\em FPIA} does not converge.

\begin{example}
    \label{ex:fpia_oscillations}
    Consider a population given in Table \ref{tab:fpia_oscillations}. Initial value of $\lambda_0 = 695.64$ causes lack of 
    convergence of the {\em FPIA} due to oscillations:
    \begin{equation*}
	\lambda_k =
	\begin{cases}
		1444,		& \ind{k = 1,\, 3,\, 5,\, \ldots} \\
		739.84,   & \ind{k = 2,\, 4,\, 6,\, \ldots.}
	\end{cases}
    \end{equation*}
	
    \begin{table}[H]
		\caption{Details of {\em FPIA} performance for a  population with four strata. Here, $\lambda_0 = 
		\tfrac{1}{s^2(\emptyset,\, \emptyset)}$, $\lambda_{k+1}$ is as in \eqref{eq:fpia_crucialstep}, and $\lambda^* = 
		\tfrac{1}{s^2(J_m^{\lambda^*},\, J_M^{\lambda^*})}$. Squares in cells indicate assignment of stratum $h \in \H = \{1, 2, 
		3, 4\}$ to $J_m^{\lambda}$ (\tmin) or $J_M^{\lambda}$ (\tmax) for a given $\lambda$.}
		\footnotesize
		\centering
		\label{tab:fpia_oscillations}
		\begin{tabular}[t]{| l | r | r r | r r || c | c | c | c || c | S |}
			\hline
			\multicolumn{6}{|c||}{population} & \multicolumn{4}{c||}{{\em FPIA}} & \multicolumn{2}{c|}{optimum allocation} \\
			\hline
			$h$ & $A_h$ & $m_h$ & $M_h$ & $\tfrac{A_h^2}{m_h^2}$ & $\tfrac{A_h^2}{M_h^2}$ & $J_m^{\lambda_0} / 
			J_M^{\lambda_0}$ & $J_m^{\lambda_1} / J_M^{\lambda_1}$ & $J_m^{\lambda_2} / J_M^{\lambda_2}$ & $\cdots$ & 
			$J_m^{\lambda^*} / J_M^{\lambda^*}$ & $\x^*$ \\
            \hline
			1 & 380 & 10 & 50 & 1444 & 57.76 & & \tmin & &  \multirow{4}{*}{$\cdots$} & & 13.1 \\	
			2 & 140 & 10 & 50 & 196 & 7.84 & \tmin & \tmin & \tmin & & \tmin & 10 \\ 
			3 & 230 & 10 & 50 & 529 & 21.16 & \tmin & \tmin & \tmin & & \tmin & 10 \\ 
			4 & 1360 & 10 & 50 & 18496 & 739.84 & \tmax & & \tmax & & & 46.9 \\
			\hline
			\multicolumn{6}{|l||}{total sample size $n = 80$} & $\lambda_0 = 695.64$ & $\lambda_1 = 1444$ & $\lambda_2 = 
			739.84$ & $\cdots$ & $\lambda^* = 841$ & \multicolumn{1}{c}{} \\
			\cline{1-11}			
		\end{tabular}
    \end{table}
 	
    Figure \ref{fig:fpia_oscillations} shows the graphs of the functions $g$ and $\phi$ for the problem considered in this 
    example.
    \begin{figure}[H]
		\caption{Functions $g$ and $\phi$ for Example \ref{ex:fpia_blocked} of the allocation problem for which the {\em FPIA} 
		does not converge.}
		\label{fig:fpia_oscillations}
		\begin{center}
			\includegraphics[height=70mm,width=140mm]{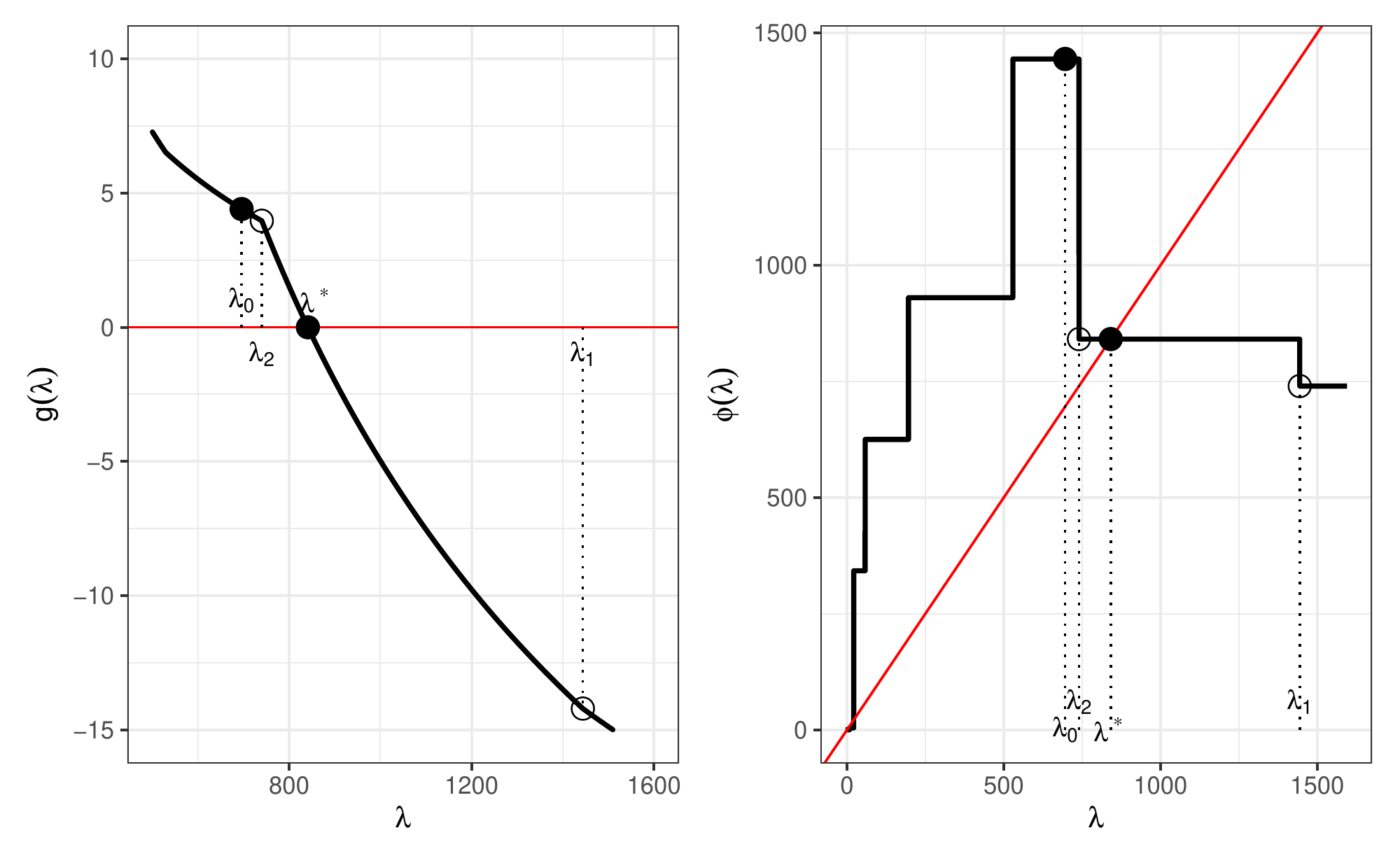}
		\end{center}
    \end{figure}
\end{example}

The issue of determining a proper starting point $\lambda_0$ was considered in MSW, with a recommendation to choose 
$\lambda_0 = \tfrac{1}{s^2(\emptyset,\, \emptyset)}$. Alternatively,  in case when $\tfrac{1}{s^2(\emptyset,\, \emptyset)}$ is 
not close enough to the optimal $\lambda^*$ (which is not known a priori), MSW suggests to first run several iterations of a 
root finding algorithm to get the starting point $\lambda_0$ for the {\em FPIA}.

%%%%%%%%%%%%%%%%%%%%%%%%%%%%%%%%%%%%%%%%%%
%																																		 %
%			APPENDIX: The proof of the Optimality Conditions									   %
%																																		 %
%%%%%%%%%%%%%%%%%%%%%%%%%%%%%%%%%%%%%%%%%%
\section{Appendix: Proof of Theorem \ref{th:optcond}} 
\label{app:optcon}

\begin{remark}
    \label{rem:prob_convex}
    Problem \ref{prob} is a convex optimization problem as its objective function $f: \R_+^{\card{\H}} \to \R_+$,
    \begin{equation}
		f(\x)  = \sum_{h \in \H} \tfrac{A_h^2}{x_h}, \label{eq:rem_prob_convex:f}
    \end{equation}
    and inequality constraint functions $g_h^m: \R_+^{\card{\H}} \to \R,\, g_h^M: \R_+^{\card{\H}} \to \R$, 
    \begin{align}
		&g_h^m(\x) = m_h - x_h,  \ind{h \in \H,} \label{eq:rem_prob_convex:gL} \\
		&g_h^M(\x) = x_h - M_h, \ind{h \in \H,} \label{eq:rem_prob_convex:gU}
    \end{align}	
    are convex functions, whilst the equality constraint function $w: \R_+^{\card{\H}} \to \R$, 
    \begin{equation*}
		w(\x) = \sum_{h \in \H} x_h - n \\
    \end{equation*}	
    is affine. More specifically, Problem \ref{prob} is a convex optimization problem of a particular type in which inequality 
    constraint functions \eqref{eq:rem_prob_convex:gL} - \eqref{eq:rem_prob_convex:gU} are affine. See Appendix 
    \ref{app:kkt} for the definition of the convex optimization problem.
\end{remark}

%%%
%%% Proof of the Optimality Conditions Theorem.
%%%
\begin{proof}[Proof of Theorem \ref{th:optcond}]
    We first prove that Problem \ref{prob} has a unique solution. The optimization Problem \ref{prob} is feasible since 
    requirements $m_h < M_h,\, h \in \H$, and $\sum_{h \in \H} m_h \leq n \leq \sum_{h \in \H} M_h$ ensure that the feasible 
    set $F := \{\x \in \R_+^{\card{\H}}: \text{ \eqref{eq:prob_cnst_eq} - \eqref{eq:prob_cnst_ineq} are all satisfied}\}$ is 
    non-empty. The objective function \eqref{eq:prob_obj} attains its minimum on $F$ since it is a continuous function and 
    $F$ is closed and bounded. Finally, uniqueness of the solution is due to strict convexity of the objective function on $F$.
	
    As explained in Remark \ref{rem:prob_convex}, Problem \ref{prob} is a convex optimization problem in which the inequality 
    constraint functions $g_h^m,\, g_h^M,\, h \in \H$ are affine. The optimal solution for such a problem can be identified 
    through the Karush-Kuhn-Tucker (KKT) conditions, in which case they are not only necessary but also sufficient; for further 
    references, see Appendix \ref{app:kkt}.
	
    The gradients of the objective function \eqref{eq:rem_prob_convex:f} and constraint functions 
    \eqref{eq:rem_prob_convex:gL} - \eqref{eq:rem_prob_convex:gU} are as follows:
    \begin{equation*}
		\nabla f(\x) = (-\tfrac{A_h^2}{x_h^2},\, h \in \H), \quad
		\nabla w(\x) = \underline 1, \quad 
		\nabla g_h^m(\x)  = -\nabla g_h^M(\x) = -\underline{1}_h, \quad h \in \H,
    \end{equation*}
    where, $\underline 1$ is a vector with all entries $1$ and $\underline{1}_h$ is a vector with all entries $0$ except the entry with the label $h$, which is $1$. Hence, the KKT conditions \eqref{KKT} for Problem \ref{prob} assume the form 
    \begin{gather}
		-\tfrac{A_h^2}{{(x^*_h)}^2} + \lambda - \mu_h^m + \mu_h^M = 0, \ind{h \in \H,} \label{eq:kkt_prob_stat} \\
		\sum_{h \in \H} x_h^*  - n = 0 \label{eq:kkt_prob_n}, \\
		m_h \leq x^*_h \leq M_h, \ind{h \in \H,} \label{eq:kkt_prob_ineq} \\
		\mu^m_h(m_h - x^*_h) = 0, \ind{h \in \H,} \label{eq:kkt_prob_compl1} \\
		\mu_h^M(x^*_h - M_h)  = 0, \ind{h \in \H.} \label{eq:kkt_prob_compl2}
    \end{gather}
    To prove Theorem \ref{th:optcond}, it suffices to show that for $\x^* = \x^{(\L^*,\, \U^*)}$ with $\L^*,\, \U^*$ satisfying 
    conditions of CASE I or CASE II, there exist $\lambda \in \R$ and $\mu^m_h,\, \mu^M_h \ge 0,\, h \in \H$, such that 
	\eqref{eq:kkt_prob_stat} - \eqref{eq:kkt_prob_compl2} hold. It should also be noted that the requirement $m_h < M_h,\, h 
	\in \H$, guarantees that $\L^*$ and $\U^*$ defined in \eqref{eq:optcond} and \eqref{eq:optcond_vertex} are disjoint. 
	Therefore, $\x^{(\L^*,\, \U^*)}$ is well-defined according to Definition \ref{def:LU}.
	
    \begin{enumerate}[wide, labelindent=0pt, leftmargin=*]
		\item[CASE I:] 
		Take $\x^*=\x^{(\L^*,\, \U^*)}$ with $\L^*$ and $\U^*$ as in \eqref{eq:optcond}. Then, \eqref{eq:kkt_prob_n} is clearly 
		met after referring to \eqref{eq:LU} and \eqref{eq:s}, while \eqref{eq:kkt_prob_ineq} follows directly from \eqref{eq:LU} 
		and \eqref{eq:optcond}, since \eqref{eq:optcond} for $h \in \H \setminus (\L^* \cup \U^*)$ specifically implies $m_h < 
		A_h\, s(\L^*,\, \U^*) < M_h$. Take $\lambda = \tfrac{1}{s^2(\L^*,\, \U^*)}$ and
		\begin{equation}
			\label{eq:optcond_proof:mu}
			\mu_h^m =
			\begin{cases}
				\lambda - \tfrac{A_h^2}{m_h^2}, & h \in \L^* \\
				0, & h \in \H \setminus \L^*,
			\end{cases}
			\qquad \qquad 
			\mu_h^M =
			\begin{cases}
				\tfrac{A_h^2}{M_h^2} - \lambda, & h \in \U^* \\
				0, & h \in \H \setminus \U^*.
			\end{cases}
		\end{equation}
		Note that \eqref{eq:optcond} along with requirement $n \geq \sum_{h \in \H} m_h$ (the latter needed if $\U^* = 
		\emptyset$) ensure $s(\L^*,\, \U^*) > 0$, whilst \eqref{eq:optcond} alone implies $\mu^m_h,\, \mu^M_h \ge 0,\, h \in 
		\H$. After referring to \eqref{eq:LU}, it is a matter of simple algebra to verify \eqref{eq:kkt_prob_stat}, 
		\eqref{eq:kkt_prob_compl1} and \eqref{eq:kkt_prob_compl2} for $\lambda,\, \mu^m_h,\, \mu^M_h,\, h \in \H$ defined 
		above.
	\item[CASE II:]
	Take $\x^* = \x^{(\L^*,\, \U^*)}$ with $\L^*,\, \U^*$ satisfying \eqref{eq:optcond_vertex} and \eqref{eq:optcond_vertex_n}. 
	Then, condition \eqref{eq:kkt_prob_n} becomes \eqref{eq:optcond_vertex_n}, while \eqref{eq:kkt_prob_ineq} is trivially 
	met due to \eqref{eq:LU}. Assume that $\L^* \neq \emptyset$ and $\U^* \neq \emptyset$ (for empty $\L^*$ or $\U^*$, 
	\eqref{eq:kkt_prob_stat}, \eqref {eq:kkt_prob_compl1} and \eqref{eq:kkt_prob_compl2} are trivially met). Take an arbitrary 
	$\tilde s > 0$ such that
	\begin{equation}
		\label{eq:optcond_vertex_proof}
		\tilde s \in \left[\max_{h \in \U^*} \tfrac{M_h}{A_h},\; \min_{h \in \L^*} \tfrac{m_h}{A_h}\right].
	\end{equation}
	Note that \eqref{eq:optcond_vertex} ensures that the interval above is well-defined. Let $\lambda = \tfrac{1}{\tilde s^2}$ 
	and $\mu_h^m,\, \mu_h^M,\, h \in \H$ be as in \eqref{eq:optcond_proof:mu}. Note that \eqref{eq:optcond_vertex_proof} 
	ensures that $\mu_h^m,\, \mu_h^M \geq 0$ for all $h \in \H$. Then it is easy to check, similarly as in CASE I, that 
	\eqref{eq:kkt_prob_stat}, \eqref{eq:kkt_prob_compl1} and \eqref{eq:kkt_prob_compl2} are satisfied.
    \end{enumerate}
\end{proof}

%%%%%%%%%%%%%%%%%%%%%%%%%%%%%%%%%%%%%%%%%%
%																																		%
%			APPENDIX: RNABOX proof																				 %
%																																		%
%%%%%%%%%%%%%%%%%%%%%%%%%%%%%%%%%%%%%%%%%%
\section{Appendix: Auxiliary lemmas and proof of Theorem \ref{th:rnabox}}
\label{app:rnabox_proof}

%%%
%%% Notation
%%%
\subsection{Notation}
\label{app:notation}

Throughout the Appendix \ref{app:rnabox_proof}, by $\U_r,\, \L_r,\, \Lt_r$, we denote sets $\U,\, \L,\, \Lt$ respectively, as 
they are in the $r$-th iteration of {\em RNABOX} algorithm after  {\footnotesize Step \ref{alg:rnabox:L}} and before 
{\footnotesize Step \ref{alg:rnabox:if}}. The iteration index $r$ takes on values from set $\{1, \ldots, r^*\}$, where $r^* \geq 1$ 
indicates the final iteration of the algorithm. Under this notation, we have $\L_1 = \emptyset$ and in general, for subsequent 
iterations, if any (i.e. if $r^* \geq 2$), we get
\begin{equation}
    \label{eq:rnabox_Lsum}
    \L_r = \L_{r-1} \cup \Lt_{r-1} = \bigcup\limits_{i = 1}^{r-1} \Lt_i, \ind{r = 2, \ldots, r^*.}
\end{equation}

\bigskip
As {\em RNABOX} iterates, objects denoted by symbols $n$ and $\H$ are being modified. However, in this Appendix 
\ref{app:rnabox_proof}, whenever we refer to $n$ and $\H$, they always denote the unmodified total sample size and the set 
of strata labels as in the input of {\em RNABOX}. In particular, this is also related to set function $s$ (defined in 
\eqref{eq:s}) which depends on $n$ and $\H$.

\bigskip
For convenient notation, for any $\A \subseteq \H$ and any set of real numbers $z_h,\, h \in \A$, we denote
\begin{equation*}
    z_\A = \sum_{h \in \A} z_h.
\end{equation*}

%%%
%%% Notation
%%%
\subsection{Auxiliary remarks and lemmas}

%%%
%%% Lemma: s(L, U) monotonicity
%%%
We start with a lemma describing important  monotonicity properties of function $s$.
\begin{lemma}
    \label{lem:smono} 
    Let $\A \subseteq \B \subseteq \H$ and  $\C \subseteq \D \subseteq \H$.
    \begin{enumerate}
		\item If $\B \cup \D \subsetneq \H$ and $\B \cap \D = \emptyset$, then
		\begin{equation}
			\label{eq:lem:smono:1}
            s(\A,\, \C) \geq s(\B,\, \D) \quad \Leftrightarrow \quad s(\A,\, \C)(A_{\B \setminus \A} + A_{\D \setminus \C}) \leq m_{\B 
            \setminus \A} + M_{\D \setminus \C.}
		\end{equation}
		\item If $\A \cup \D \subsetneq \H$, $\A \cap \D = \emptyset$, $\B \cup \C \subsetneq \H$, $\B \cap \C = \emptyset$, then
        \begin{equation}
        	\label{eq:lem:smono:2}
            s(\A,\, \D) \geq s(\B,\, \C) \quad \Leftrightarrow \quad s(\A,\, \D)(A_{\B \setminus \A} - A_{\D \setminus \C}) \leq m_{\B 
            \setminus \A} - M_{\D \setminus \C.}
		\end{equation}
    \end{enumerate}
\end{lemma}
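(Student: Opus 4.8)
The plan is to reduce both equivalences to a single elementary fact about comparing two fractions whose numerators and denominators differ by known amounts, after first recording how the numerator and denominator of $s$ change when its arguments are enlarged or shrunk. First I would rewrite $s$ using the abbreviation $z_\A = \sum_{h \in \A} z_h$ together with the disjointness of its arguments: for any disjoint $\mathcal{P}, \mathcal{Q} \subseteq \H$ with $\mathcal{P} \cup \mathcal{Q} \subsetneq \H$, definition \eqref{eq:s} gives
\[
s(\mathcal{P}, \mathcal{Q}) = \frac{n - m_{\mathcal{P}} - M_{\mathcal{Q}}}{A_\H - A_{\mathcal{P}} - A_{\mathcal{Q}}},
\]
since $A_{\H \setminus (\mathcal{P} \cup \mathcal{Q})} = A_\H - A_{\mathcal{P}} - A_{\mathcal{Q}}$, and here the denominator is strictly positive because $\H \setminus (\mathcal{P} \cup \mathcal{Q}) \neq \emptyset$ and every $A_h > 0$. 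In Part 1 the pairs $(\A,\C)$ and $(\B,\D)$ are disjoint (since $\A \cap \C \subseteq \B \cap \D = \emptyset$, and $\B \cap \D = \emptyset$ by hypothesis), and $\A \cup \C \subseteq \B \cup \D \subsetneq \H$; in Part 2 the pairs $(\A,\D)$ and $(\B,\C)$ are disjoint with proper unions by hypothesis. Hence the representation applies to every value of $s$ occurring in the statement, and all four denominators are positive.

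Next I would isolate the generic fraction fact: \emph{if $P, Q > 0$ with $P = Q + a$ and $N_1 = N_2 + b$, then $\tfrac{N_1}{P} \geq \tfrac{N_2}{Q}$ if and only if $\tfrac{N_1}{P}\, a \leq b$.} This is immediate by cross-multiplying the inequality $\tfrac{N_1}{P} \geq \tfrac{N_1 - b}{P - a}$ (legitimate since both $P$ and $Q = P - a$ are positive), which collapses to $N_1 a \leq b P$, i.e. $\tfrac{N_1}{P}\, a \leq b$.

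It then remains to compute, in each part, the increments $a$ and $b$ of the denominator and numerator of $s$. Writing $P, N_1$ for the denominator and numerator of the first $s$ and $Q, N_2$ for those of the second, I would use $\A \subseteq \B$ and $\C \subseteq \D$ (so that $A_\B - A_\A = A_{\B \setminus \A}$, $M_\D - M_\C = M_{\D \setminus \C}$, and so on) to obtain: in Part 1, where both arguments grow ($\A \to \B$ and $\C \to \D$), $P - Q = A_{\B \setminus \A} + A_{\D \setminus \C}$ and $N_1 - N_2 = m_{\B \setminus \A} + M_{\D \setminus \C}$; in Part 2, where the first argument grows but the second shrinks ($\A \to \B$, $\D \to \C$), $P - Q = A_{\B \setminus \A} - A_{\D \setminus \C}$ and $N_1 - N_2 = m_{\B \setminus \A} - M_{\D \setminus \C}$. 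Substituting these values of $a = P - Q$ and $b = N_1 - N_2$ into the fraction fact yields \eqref{eq:lem:smono:1} and \eqref{eq:lem:smono:2} respectively.

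The computation is routine; the only point requiring care is the set bookkeeping that fixes the \emph{signs} in $P - Q$ and $N_1 - N_2$, and this is precisely where the two hypotheses diverge. In Part 1 the condition $\B \cap \D = \emptyset$ forces $\B \setminus \A$ and $\D \setminus \C$ to be disjoint, so their $A$-masses add (and likewise the $m$/$M$ increments), whereas in Part 2 enlarging the first argument while shrinking the second makes the two contributions subtract. I expect this sign-tracking to be the main, rather mild, obstacle; everything else is the one-line fraction comparison above.
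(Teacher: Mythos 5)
Your proposal is correct and follows essentially the same route as the paper: both reduce the lemma to the elementary fraction-comparison fact that for positive denominators $\tfrac{N_1}{P} \geq \tfrac{N_1-b}{P-a}$ iff $\tfrac{N_1}{P}\,a \leq b$ (the paper phrases it as $\tfrac{\alpha+\beta}{\gamma+\delta} \geq \tfrac{\alpha}{\gamma} \Leftrightarrow \tfrac{\alpha+\beta}{\gamma+\delta}\,\delta \leq \beta$), and then compute the numerator/denominator increments of $s$ using $\A \subseteq \B$, $\C \subseteq \D$ and the stated disjointness, including the check that $A_\H - A_{\A \cup \D} > 0$ in Part 2. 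No gaps.
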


\begin{proof}
    Clearly, for any $\alpha \in \R,\, \beta \in \R, \delta \in \R$, $\gamma > 0,\, \gamma + \delta > 0$, we have 
    \begin{equation}
		\label{eq:lem:proof:smono:gene}
		\tfrac{\alpha+\beta}{\gamma+\delta} \geq \tfrac{\alpha}{\gamma} \quad \Leftrightarrow \quad	 
		\tfrac{\alpha+\beta}{\gamma+\delta} \delta \leq \beta.
    \end{equation}
    To prove \eqref{eq:lem:smono:1}, take
    \begin{align*}
		\alpha	  &= n - m_{\B} - M_{\D}       & \beta  &= m_{\B \setminus \A} + M_{\D \setminus \C} \\
		\gamma &= A_{\H} - A_{\B \cup \D} & \delta &= A_{\B \setminus \A} + A_{\D \setminus \C}.
    \end{align*}
    Then, $\tfrac{\alpha}{\gamma} = s(\B,\, \D),\, \tfrac{\alpha + \beta}{\gamma + \delta} = s(\A,\, \C)$, and hence 
    \eqref{eq:lem:smono:1} holds as an immediate consequence of \eqref{eq:lem:proof:smono:gene}. \\
    Similarly for  \eqref{eq:lem:smono:2}, take
    \begin{align*}
		\alpha	  &= n - m_{\B} - M_{\C} 	   & \beta  &= m_{\B \setminus \A} - M_{\D \setminus \C}  \\
		\gamma &= A_{\H} - A_{\B \cup \C} & \delta &= A_{\B \setminus \A} - A_{\D \setminus \C},
    \end{align*}
    and note that $\gamma + \delta = A_{\H} - A_{\B \cup \C} + A_{\B \setminus \A} - A_{\D \setminus \C} = A_{\H} - A_{\B} - 
    A_{\C} + A_{\B} - A_{\A} - A_{\D} + A_{\C} = A_{\H} - A_{\A \cup \D} > 0$ due to the assumptions made for $\A,\, \D,\, \B,\, 
    \C$, and $A_h > 0,\, h \in \H$. Then, $\tfrac{\alpha}{\gamma} = s(\B,\, \C)$, $\tfrac{\alpha + \beta}{\gamma + \delta} = s(\A,\, 
    \D)$, and hence \eqref{eq:lem:smono:2} holds as an immediate consequence of \eqref{eq:lem:proof:smono:gene}.
\end{proof}

The remark below describes some relations between sets $\L_r$ and $\U_r,\, r = 1, \ldots, r^* \geq 1$, appearing in {\em 
RNABOX} algorithm. These relations are particularly important for understanding computations involving the  set function $s$ 
(recall, that it is defined only for such two disjoint sets, the union of which is a proper subset of $\H$).

%%%
%%% Ramark about the L_r and U_r in the RNABOX.
%%%
\begin{remark}
    \label{rem:rnabox_LcupU} 
    For $r^* \geq 1$,
    \begin{equation}
		\label{eq:rem:rnabox_LU_disjoint}
    	\L_r \cap \U_r = \emptyset, \ind{r = 1, \ldots, r^*,}
    \end{equation}
    and for $r^* \geq 2$,
    \begin{equation}
		\label{eq:rem:rnabox_LcupU:any}
		\L_r \cup \U_r \subsetneq \H, \ind{r = 1, \ldots, r^* - 1.}
    \end{equation}
    Moreover, let $\x^*$ be as in {\footnotesize Step \ref{alg:rnabox:ret}} of {\em RNABOX} algorithm. Then, for $r^*\ge 1$, 
    \begin{equation}
		\label{eq:rem:rnabox_LcupU:reg}
		\L_{r^*} \cup \U_{r^*} \subsetneq \H \quad \Leftrightarrow \quad \mathbf x^* \mbox{ is a {\em regular} allocation,}
    \end{equation}
    and
    \begin{equation}
		\label{eq:rem:rnabox_LcupU:ver}
		\L_{r^*} \cup \U_{r^*} = \H \quad \Leftrightarrow \quad \mathbf x^*\mbox{is a {\em vertex} allocation.}
    \end{equation} 
\end{remark}

\begin{proof}
    From the definition of set $\U$ in {\footnotesize Step \ref{alg:rnabox:rna}} of {\em RNABOX}, for $r^* \geq 1$,
    \begin{equation}
		\label{eq:rem:rnabox_LcupU:proof:Usubset}
    	\U_r \subseteq \H \setminus \L_r, \ind{r = 1, \ldots, r^*,}
    \end{equation}
    which proves \eqref{eq:rem:rnabox_LU_disjoint}. Following \eqref{eq:rnabox_Lsum}, for $r^* \geq 2$,
     \begin{equation}
		\label{eq:rem:rnabox_LcupU:proof:Lsubset}
    	\L_r = \bigcup\limits_{i = 1}^{r - 1} \Lt_i \subseteq \H, \ind{r = 2, \ldots, r^*,}
    \end{equation}
    where the inclusion is due to definition of set $\Lt$ in {\footnotesize Step \ref{alg:rnabox:L}} of {\em RNABOX}, i.e. $\Lt_r 
    \subseteq \H \setminus (\L_r \cup \U_r)$ for $r = 1, \ldots, r^*$. Inclusions 
    \eqref{eq:rem:rnabox_LcupU:proof:Usubset}, \eqref{eq:rem:rnabox_LcupU:proof:Lsubset} with $\L_1 = \emptyset$ imply
    \begin{equation}
		\label{eq:rem:rnabox_LcupU:proof:LU}
		\L_r \cup \U_r \subseteq \H, \ind{r = 1, \ldots, r^* \geq 1.}
    \end{equation}
    Given that $r^* \geq 2$, {\footnotesize Step \ref{alg:rnabox:if}} of the algorithm ensures that set $\Lt_r \subseteq \H   
    \setminus (\L_r \cup \U_r)$ is non-empty for $r = 1, \ldots, r^* -1$, which implies $\L_r \cup \U_r \neq \H$. This fact 
    combined with \eqref{eq:rem:rnabox_LcupU:proof:LU} gives \eqref{eq:rem:rnabox_LcupU:any}.
    Equivalences \eqref{eq:rem:rnabox_LcupU:reg} and \eqref{eq:rem:rnabox_LcupU:ver} hold trivially after referring to 
    Definition \ref{def:vertex_regular} of {\em regular} and {\em vertex} allocations.
\end{proof}

The following two remarks summarize some important facts arising from {\footnotesize Step \ref{alg:rnabox:rna}} of {\em 
RNABOX} algorithm. These facts will serve as starting points for most of the proofs presented in this section.

%%%
%%% Ramark about allocation from Step 2 of the RNABOX.
%%%
\begin{remark}
    \label{rem:rnabox_rna_x} 
    In each iteration $r = 1, \ldots, r^* \geq 1$, of RNABOX algorithm, a vector $(x^{**}_h,\, h \in \H \setminus \L_r)$ obtained in 
    {\footnotesize Step \ref{alg:rnabox:rna}}, has the elements of the form
    \begin{equation}
		\label{eq:rem:rnabox_rna_x}
		x^{**}_h =
		\begin{cases}
			M_h,						& \qquad{h \in \U_r \subseteq \H \setminus \L_r} \\
			A_h\, s(\L_r,\, \U_r) < M_h,	& \ind{h \in \H \setminus (\L_r \cup \U_r),}
		\end{cases}
    \end{equation}
    where the set function $s$ is defined in \eqref{eq:s}. Equation \eqref{eq:rem:rnabox_rna_x} is a direct consequence of 
    Theorem \ref{th:optcond_one_sided}.
\end{remark}

%%%
%%% Ramark about the sets: U_r, U_r^* (regular) and \Lt_r in RNABOX
%%%
\begin{remark}
    \label{rem:rnabox_UUL}
    Remark \ref{rem:rnabox_rna_x} together with Theorem \ref{th:optcond_one_sided}, for $r^* \geq 2$ yield
    \begin{equation}
		\label{eq:rem:rnabox_UUL:U}
		\U_r = \{h \in \H \setminus \L_r:\, A_h\, s(\L_r,\, \U_r) \geq M_h\}, \ind{r = 1, \ldots, r^* - 1,}
    \end{equation}
    whilst for $r^*\ge 1$,
    \begin{equation}
		\label{eq:rem:rnabox_UUL:U_reg}
		\U_{r^*} = \{h \in \H \setminus \L_{r^*}:\, A_h\, s(\L_{r^*},\, \U_{r^*}) \geq M_h\},
    \end{equation}
    if and only if $\x^*$ (computed at {\footnotesize Step \ref{alg:rnabox:ret}} of {\em RNABOX} algorithm) is: a {\em regular} 
    allocation or a {\em vertex} allocation with $\L_{r^*} = \H$.
	
    Moreover, for $r^* \geq 1$,
    \begin{equation}
		\label{eq:rem:rnabox_UUL:L}
		\Lt_r = \{h \in \H \setminus (\L_r \cup \U_r):\, A_h\, s(\L_r,\, \U_r) \leq m_h\}, \ind{r = 1, \ldots, r^*.}
    \end{equation}
\end{remark}
Note that in Remark \ref{rem:rnabox_UUL}, function $s$ is well-defined due to Remark \ref{rem:rnabox_LcupU}. The need to 
limit the scope of \eqref{eq:rem:rnabox_UUL:U_reg} to {\em regular} allocations only, is dictated by the fact that in the case 
of a {\em vertex} allocation we have $\L_{r^*} \cup \U_{r^*} = \H$ (see \eqref{eq:rem:rnabox_LcupU:ver}) and therefore 
$s(\L_{r^*},\, \U_{r^*})$ is not well-defined.

%%%
%%% Lemma: sequence U_1, U_2, ..., U_r^* is non-increasing.
%%%
\bigskip
Lemma \ref{lem:U} and Lemma \ref{lem:s_in_rnabox} reveal certain monotonicity properties of sequence $(\U_r)_{r = 
1}^{r^*}$ and sequence $\left(s(\L_r,\, \U_r)\right)_{r = 1}^{r^*}$, respectively. These properties will play a crucial role in 
proving Theorem \ref{th:rnabox}.

\begin{lemma}
    \label{lem:U}
    Sequence $(\U_r)_{r = 1}^{r^*}$ is non-increasing, that is, for $r^* \geq 2$,
    \begin{equation}
		\label{eq:lem:U}
		\U_r \supseteq \U_{r+1}, \ind{r = 1, \ldots, r^* - 1.}
    \end{equation}
\end{lemma}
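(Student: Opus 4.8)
The plan is to fix $r\in\{1,\dots,r^*-1\}$ and prove the single inclusion $\U_{r+1}\subseteq\U_r$; since \eqref{eq:lem:U} is exactly the conjunction of these inclusions over $r$, this suffices. Throughout I write $V_r:=\H\setminus\L_r$ and $s_r:=s(\L_r,\,\U_r)$, the latter being well defined for $r\le r^*-1$ by \eqref{eq:rem:rnabox_LcupU:any}. From \eqref{eq:rnabox_Lsum} we have $\L_{r+1}=\L_r\cup\Lt_r$ with $\Lt_r\subseteq V_r\setminus\U_r$, whence $\U_r\subseteq V_{r+1}\subseteq V_r$, and the budget supplied to {\em RNA} is $n_r:=n-m_{\L_r}$ in iteration $r$ and $n_{r+1}=n_r-m_{\Lt_r}$ in iteration $r+1$. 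If $V_{r+1}=\U_r$ the claim is immediate, because the take-max set returned by {\em RNA} in iteration $r+1$ is always contained in its pool $V_{r+1}$; so I may assume $V_{r+1}\setminus\U_r\neq\emptyset$. The one numerical fact I will need is $s_r\,A_{\Lt_r}\le m_{\Lt_r}$, which follows by summing the defining inequality $A_h\,s_r\le m_h$ of \eqref{eq:rem:rnabox_UUL:L} over $h\in\Lt_r$.

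First I would record an invariance property: deleting the strata $\Lt_r$ at their \emph{take-Neyman} values leaves the rest of the iteration-$r$ {\em RNA} solution unchanged. Concretely, consider the one-sided Problem \ref{prob:upper} on the pool $V_{r+1}$ with upper bounds $M_h$ and the enlarged budget $\hat n:=n_r-s_r\,A_{\Lt_r}$. A direct substitution, using $n_r-M_{\U_r}=s_r\,A_{V_r\setminus\U_r}$ from iteration $r$, gives $\frac{\hat n-M_{\U_r}}{A_{V_{r+1}\setminus\U_r}}=s_r$; combined with $A_h\,s_r\ge M_h$ for $h\in\U_r$ and $A_h\,s_r<M_h$ for $h\in V_{r+1}\setminus\U_r$ (both inherited from iteration $r$ through \eqref{eq:rem:rnabox_UUL:U} and Remark \ref{rem:rnabox_rna_x}), this shows that the allocation equal to $M_h$ on $\U_r$ and to $A_h\,s_r$ off $\U_r$ satisfies the one-sided optimality conditions (Theorem \ref{th:optcond_one_sided}) at budget $\hat n$. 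Hence {\em RNA} run on $V_{r+1}$ with budget $\hat n$ returns take-max set $\U_r$ and characteristic value $s_r$.

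The actual iteration $r+1$ solves the \emph{same} one-sided problem on $V_{r+1}$, with the \emph{same} bounds, but at budget $n_{r+1}=n_r-m_{\Lt_r}\le n_r-s_r\,A_{\Lt_r}=\hat n$, the inequality being the numerical fact above. The proof is then closed by the monotonicity of {\em RNA} in the total sample size: on a fixed pool with fixed upper bounds, decreasing the budget can only shrink the take-max set. Applying this with the budgets $n_{r+1}\le\hat n$, whose take-max sets are $\U_{r+1}$ and $\U_r$ respectively, yields $\U_{r+1}\subseteq\U_r$.

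I expect this last monotonicity to be the only real obstacle, as everything else is a substitution into already-established formulas. I would prove it from the one-sided optimality conditions by contradiction. For a common pool $V$ and budgets $\tilde n_1\le\tilde n_2$, let $\U^{(i)}$ and $s^{(i)}$ be the returned take-max sets and characteristic values, so that $\U^{(i)}=\{h\in V:\,A_h/M_h\ge 1/s^{(i)}\}$ and $\tilde n_i=M_{\U^{(i)}}+s^{(i)}A_{V\setminus\U^{(i)}}$. Assuming $s^{(1)}>s^{(2)}$ forces $\U^{(2)}\subseteq\U^{(1)}$, and expanding $\tilde n_1-\tilde n_2$ then gives $\big(M_{\U^{(1)}\setminus\U^{(2)}}-s^{(2)}A_{\U^{(1)}\setminus\U^{(2)}}\big)+(s^{(1)}-s^{(2)})A_{V\setminus\U^{(1)}}$, which is strictly positive because $M_h-s^{(2)}A_h>0$ for every $h\in\U^{(1)}\setminus\U^{(2)}$ and because $V\setminus\U^{(1)}\neq\emptyset$; this contradicts $\tilde n_1\le\tilde n_2$. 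Therefore $s^{(1)}\le s^{(2)}$ and $\U^{(1)}\subseteq\U^{(2)}$. The degenerate all-capped configurations, in which a denominator $A_{V\setminus\U}$ would vanish, correspond precisely to $V_{r+1}=\U_r$ (already excluded) and, as one checks from $\hat n<M_{V_{r+1}}$, cannot occur for $\U_{r+1}$ either, so the argument goes through.
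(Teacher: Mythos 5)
Your proof is correct, but it takes a genuinely different route from the paper's. The paper argues by contradiction entirely inside the set-function calculus: it notes that $\U_r$ and $\U_{r+1}$ are both threshold sets over the common domain $\H\setminus\L_{r+1}$ (hence nested one way or the other), assumes $\U_r\subsetneq\U_{r+1}$, sums the defining inequalities from \eqref{eq:rem:rnabox_UUL:U} and \eqref{eq:rem:rnabox_UUL:L} over $\U_{r+1}\setminus\U_r$ and $\Lt_r$, converts the resulting inequality into $s(\L_r,\U_r)>s(\L_{r+1},\U_{r+1})$ via Lemma \ref{lem:smono}, and reaches a contradiction with the threshold condition for $\U_{r+1}$ --- with a separate computation for $r=r^*-1$ in the \emph{vertex} case, where $s(\L_{r^*},\U_{r^*})$ is undefined. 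You instead recast iterations $r$ and $r+1$ as the same one-sided Problem \ref{prob:upper} on the pool $\H\setminus\L_{r+1}$ at two budgets $\hat n\geq n_{r+1}$, check via Theorem \ref{th:optcond_one_sided} (and uniqueness) that the larger budget reproduces $\U_r$ with the same characteristic value $s_r$, and then invoke a budget-monotonicity property of \emph{RNA} that you prove from the optimality conditions. I verified the key substitution $\hat n-M_{\U_r}=s_r A_{V_{r+1}\setminus\U_r}$ and the strict inequality $\hat n<M_{V_{r+1}}$ ruling out the degenerate all-capped case; both are sound, and your auxiliary monotonicity argument is airtight. What your route buys is a uniform treatment of every iteration including the last one (no \emph{regular}/\emph{vertex} dichotomy), independence from Lemma \ref{lem:smono}, and a transparent explanation of the phenomenon: moving $\Lt_r$ to its lower bounds withdraws at least its Neyman share $s_rA_{\Lt_r}$ from the budget, and a smaller budget on a fixed pool can only shrink the take-max set. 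The cost is an extra lemma (monotonicity of the \emph{RNA} output in $n$) that the paper does not state, though it is of independent interest and your proof of it is complete.
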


\begin{proof}
    Let $r^* \geq 2$ and $r = 1, \ldots, r^* - 1$. Then, by \eqref{eq:rem:rnabox_LcupU:any}, $\L_r \cup \U_r  
    \subsetneq \H$. Following \eqref{eq:rem:rnabox_UUL:U}, the domain of discourse for $\U_r$ is  $\H \setminus \L_r$, and in 
    fact it is $\H \setminus (\L_r \cup \Lt_r) = \H \setminus \L_{r+1}$, since $\U_r \not\subset \Lt_r$ as ensured by 
    {\footnotesize Step \ref{alg:rnabox:L}} of {\em RNABOX}. That is, both $\U_r$ and $\U_{r+1}$ have essentially the same 
    domain of discourse, which is $\H \setminus \L_{r+1}$. Given this fact and the form of the set-builder predicate in 
    \eqref{eq:rem:rnabox_UUL:U} - \eqref{eq:rem:rnabox_UUL:U_reg} as well as equality $\U_{r^*} = \H \setminus \L_{r^*}$ for 
    the case when $\x^*$ is a  {\em vertex} allocation (for which \eqref{eq:rem:rnabox_UUL:U_reg} does not apply), we 
    conclude that only one of the following two distinct cases is possible: $\U_r \supseteq \U_{r+1}$ or $\U_r \subsetneq 
    \U_{r+1}$. 
	
    The proof is by contradiction, that is, assume that \eqref{eq:lem:U} does not hold. Therefore, in view of the above 
    observation, there exists $r \in \{1, \ldots, r^* - 1\}$ such that $\U_r \subsetneq \U_{r+1}$. Then,
    \begin{equation}
		\label{eq:lem:U:proof:assumption}
		\emptyset \neq (\U_{r+1} \setminus \U_r) \subsetneq \H \setminus (\L_r \cup \U_r),
    \end{equation}
    and hence, due to \eqref{eq:rem:rnabox_rna_x},
    \begin{equation}
		\label{eq:lem:U:proof:Uc}
		A_h\, s(\L_r,\, \U_r) < M_h, \ind{h \in \U_{r+1} \setminus \U_r.} 
    \end{equation}
    On the other hand, from \eqref{eq:rem:rnabox_UUL:L},
    \begin{equation}
		\label{eq:lem:U:proof:L}
		A_h\, s(\L_r,\, \U_r) \leq m_h, \ind{h \in \Lt_r.} 
    \end{equation}
    Summing sidewise: \eqref{eq:lem:U:proof:Uc} over $h \in \U_{r+1} \setminus \U_r$, \eqref{eq:lem:U:proof:L} over $h \in \Lt_r$, and then all together, we get
    \begin{equation}
		\label{eq:lem:U:proof:smono2}
		s(\L_r,\, \U_r) (A_{\Lt_r} + A_{\U_{r+1} \setminus \U_r}) < m_{\Lt_r} + M_{\U_{r+1} \setminus \U_r}.
    \end{equation}
    \begin{enumerate}[wide, labelindent=0pt, leftmargin=*, font=\itshape]
    	
    	\item[Vector $\x^*$ is a {\em regular} allocation:]
    	In this case, following Remark \ref{rem:rnabox_LcupU}, we see that inequality \eqref{eq:lem:U:proof:smono2} is the 
    	right-hand side of equivalence \eqref{eq:lem:smono:1} with
    	\begin{align}
    		\A &= \L_r \subseteq (\L_r \cup \Lt_r) = \L_{r+1} = \B \subsetneq \H, \\
    		\C &= \U_r \subsetneq \U_{r+1} = \D \subsetneq \H.
    	\end{align}
    	Then, following Lemma \ref{lem:smono}, inequality \eqref{eq:lem:U:proof:smono2} is equivalent to
        \begin{equation}
    		\label{eq:lem:U:proof:smono1}
    		s(\L_r, \U_r) > s(\L_{r+1}, \U_{r+1}).
    	\end{equation}
    	Combining
	    \begin{equation}
			\label{eq:lem:U:proof:step_rna_Dr1}
			s(\L_{r+1},\, \U_{r+1}) \geq \tfrac{M_h}{A_h},  \ind{h \in \U_{r+1},}
    	\end{equation}
    	(it follows from \eqref{eq:rem:rnabox_UUL:U}-\eqref{eq:rem:rnabox_UUL:U_reg}) with inequalities 
    	\eqref{eq:lem:U:proof:smono1} and \eqref{eq:lem:U:proof:Uc}, we get the contradiction
	    \begin{equation}
			\tfrac{M_h}{A_h} > s(\L_r,\, \U_r) > s(\L_{r+1},\, \U_{r+1}) \geq \tfrac{M_h}{A_h}, \ind{h \in \U_{r+1} \setminus \U_r.}
		\end{equation}
    	Therefore, \eqref{eq:lem:U} holds true, given that $\L_{r^*} \cup \U_{r^*} \subsetneq \H$.
    
    	\item[Vector $\x^*$ is a {\em vertex} allocation:]
    	Since $\L_{r+1} \cup \U_{r+1} \subsetneq \H$ for $r = 1, \ldots, r^* -2$, the proof of \eqref{eq:lem:U} for such $r$ is 
    	identical to the 
    	proof for the case of {\em regular} allocation. Hence, we only need to show that \eqref{eq:lem:U} holds for $r = r^* -1$. 
    	For this purpose, we will exploit inequality \eqref{eq:lem:U:proof:smono2}, which in view of Definition \ref{def:s} of set 
    	function $s$, assumes the following form for $r = r^* - 1$,
    	\begin{equation}
			\label{eq:lem:U:proof:smono_vertex}
			\tfrac{n - m_{\L_{r^*-1}} - M_{\U_{r^*-1}}}{A_{\H} - A_{\L_{r^*-1} \cup \U_{r^*-1}}}\, (A_{\Lt_{r^*-1}} + A_{\U_{r^*} 	
			\setminus \U_{r^*-1}}) <  m_{\Lt_{r^*-1}} + M_{\U_{r^*} \setminus \U_{r^*-1}}.
    	\end{equation}
    	Since $A_{\H} - A_{\L_{r^*-1} \cup \U_{r^*-1}} = A_{\Lt_{r^*-1}} + A_{\U_{r^*} \setminus \U_{r^*-1}}$, for $\L_{r^*} \cup 
    	\U_{r^*} = \H$, inequality \eqref{eq:lem:U:proof:smono_vertex} simplifies to
    	\begin{equation}
			n <  m_{\Lt_{r^*-1}} + m_{\L_{r^*-1}} + M_{\U_{r^*} \setminus \U_{r^*-1}} + M_{\U_{r^*-1}} = m_{\L_{r^*}} + 
			M_{\U_{r^*}} = n,
	    \end{equation}
    	which is a contradiction. Note that the last equality follows from {\footnotesize Step \ref{alg:rnabox:rna}} of the {\em 
    	RNABOX} after referring to \eqref{eq:th:optcond_one_sided:M} and using the fact that $\U_{r^*} = \H \setminus 
    	\L_{r^*}$ for a {\em vertex} allocation. Therefore, \eqref{eq:lem:U} holds true also for $\L_{r^*} \cup \U_{r^*} = \H$.
    
        \end{enumerate}
\end{proof}

%%%
%%% Lemma:  s(L_1, _1) >= s(L_2, U_2) ...  >= s(L_r*, U_r*)
%%%
\begin{lemma}
    \label{lem:s_in_rnabox}
    Let $r^*\ge 3$. Then
    \begin{equation}
		\label{eq:lem:s_in_rnabox_any}
		s(\L_r,\, \U_r) \geq s(\L_{r+1},\, \U_{r+1}), \ind{r = 1, \ldots, r^* -2.}
    \end{equation}
    Moreover, if $\x^*$ (computed at {\footnotesize Step \ref{alg:rnabox:ret}} of {\em RNABOX} 
    algorithm) is a {\em regular} allocation and $r^* \geq 2$, then
    \begin{equation}
		\label{eq:lem:s_in_rnabox_reg}
		s(\L_{r^* -1},\, \U_{r^* -1}) \geq s(\L_{r^*},\, \U_{r^*}).
    \end{equation}
\end{lemma}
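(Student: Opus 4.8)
The plan is to reduce each comparison $s(\L_r,\, \U_r) \geq s(\L_{r+1},\, \U_{r+1})$ to the algebraic inequality furnished by the second monotonicity equivalence \eqref{eq:lem:smono:2} of Lemma \ref{lem:smono}, and then to verify that inequality by summing the defining bounds of the sets $\Lt_r$ and $\U_r$. The key observation is that as the algorithm passes from iteration $r$ to $r+1$ the \emph{take-min} set grows, $\L_r \subseteq \L_{r+1} = \L_r \cup \Lt_r$ by \eqref{eq:rnabox_Lsum}, while the \emph{take-max} set shrinks, $\U_{r+1} \subseteq \U_r$ by Lemma \ref{lem:U}. This is exactly the configuration covered by \eqref{eq:lem:smono:2}: I would set $\A = \L_r$, $\B = \L_{r+1}$, $\C = \U_{r+1}$, $\D = \U_r$, so that $\A \subseteq \B$, $\C \subseteq \D$, $s(\A,\, \D) = s(\L_r,\, \U_r)$ and $s(\B,\, \C) = s(\L_{r+1},\, \U_{r+1})$, and in particular $\B \setminus \A = \Lt_r$ and $\D \setminus \C = \U_r \setminus \U_{r+1}$.

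Before invoking the equivalence I would check its hypotheses. Disjointness $\A \cap \D = \L_r \cap \U_r = \emptyset$ and $\B \cap \C = \L_{r+1} \cap \U_{r+1} = \emptyset$ follows from \eqref{eq:rem:rnabox_LU_disjoint}. For the properness requirements $\L_r \cup \U_r \subsetneq \H$ and $\L_{r+1} \cup \U_{r+1} \subsetneq \H$ I would split according to the two claimed statements. For \eqref{eq:lem:s_in_rnabox_any} one has $r \le r^*-2$, so both $r$ and $r+1$ lie in $\{1, \ldots, r^*-1\}$ and \eqref{eq:rem:rnabox_LcupU:any} makes both unions proper. For \eqref{eq:lem:s_in_rnabox_reg} one takes $r = r^*-1$; then $\L_r \cup \U_r \subsetneq \H$ still holds by \eqref{eq:rem:rnabox_LcupU:any}, while $\L_{r^*} \cup \U_{r^*} \subsetneq \H$ is precisely the regularity of $\x^*$ through \eqref{eq:rem:rnabox_LcupU:reg}. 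With the hypotheses secured, \eqref{eq:lem:smono:2} turns the desired inequality into
\[
s(\L_r,\, \U_r)\bigl(A_{\Lt_r} - A_{\U_r \setminus \U_{r+1}}\bigr) \leq m_{\Lt_r} - M_{\U_r \setminus \U_{r+1}}.
\]

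To establish this last inequality I would use the explicit set-builder characterizations valid in this range of $r$. From \eqref{eq:rem:rnabox_UUL:L}, every $h \in \Lt_r$ satisfies $A_h\, s(\L_r,\, \U_r) \leq m_h$, so summing over $\Lt_r$ gives $s(\L_r,\, \U_r)\, A_{\Lt_r} \leq m_{\Lt_r}$. Since $r \leq r^*-1$, \eqref{eq:rem:rnabox_UUL:U} characterizes $\U_r$, and because $\U_r \setminus \U_{r+1} \subseteq \U_r$, every $h \in \U_r \setminus \U_{r+1}$ satisfies $A_h\, s(\L_r,\, \U_r) \geq M_h$, whence $s(\L_r,\, \U_r)\, A_{\U_r \setminus \U_{r+1}} \geq M_{\U_r \setminus \U_{r+1}}$. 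Subtracting the second bound from the first yields exactly the required inequality, and notably no sign assumption on $s(\L_r,\, \U_r)$ is needed in the combination. The one genuinely delicate point is the properness check $\L_{r+1} \cup \U_{r+1} \subsetneq \H$ in the boundary instance $r = r^*-1$ of \eqref{eq:lem:s_in_rnabox_reg}: this is where the regularity hypothesis on $\x^*$ becomes indispensable, since for a \emph{vertex} allocation one has $\L_{r^*} \cup \U_{r^*} = \H$ and $s(\L_{r^*},\, \U_{r^*})$ would not even be defined, so the claim is correctly restricted to regular allocations there.
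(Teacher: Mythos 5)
Your proposal is correct and follows essentially the same route as the paper's proof: the same substitution $\A = \L_r$, $\B = \L_{r+1}$, $\C = \U_{r+1}$, $\D = \U_r$ into \eqref{eq:lem:smono:2}, the same summation of the bounds from \eqref{eq:rem:rnabox_UUL:U} and \eqref{eq:rem:rnabox_UUL:L} over $\U_r \setminus \U_{r+1}$ and $\Lt_r$, and the same appeal to Lemma \ref{lem:U} and Remark \ref{rem:rnabox_LcupU} to justify the hypotheses. Your explicit handling of the properness checks, and the remark that regularity is exactly what makes the boundary case $r = r^*-1$ work, match the paper's treatment (stated there more tersely as "the same reasoning remains valid").
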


\begin{proof}
    We first prove \eqref{eq:lem:s_in_rnabox_any}. Let $r^* \geq 3$ and $r = 1, \ldots, r^* - 2$. Following Lemma \ref{lem:U} 
    and using \eqref{eq:rem:rnabox_UUL:U},
    \begin{equation}
		\label{eq:lem:s_in_rnabox:proof:U}
		A_h\, s(\L_r,\, \U_r) \geq M_h, \ind{h \in \U_r \setminus \U_{r+1}.}
    \end{equation}
    On the other hand, from \eqref{eq:rem:rnabox_UUL:L},
    \begin{equation}
		\label{eq:lem:s_in_rnabox:proof:L}
		A_h\, s(\L_r,\, \U_r) \leq m_h, \ind{h \in \Lt_r.} 
    \end{equation}
    Multiplying both sides of inequality \eqref{eq:lem:s_in_rnabox:proof:U} by $-1$, summing it sidewise over $h \in \U_r 
    \setminus \U_{r+1}$ and then adding it to \eqref{eq:lem:s_in_rnabox:proof:L}, which is previously summed sidewise over $h 
    \in \Lt_r$, we get
    \begin{equation}
		\label{eq:lem:s_in_rnabox:proof:smono1}
		s(\L_r,\, \U_r) (A_{\Lt_r} - A_{\U_r \setminus \U_{r+1}}) \leq m_{\Lt_r} - M_{\U_r \setminus \U_{r+1}}.
    \end{equation}
    Relation \eqref{eq:lem:s_in_rnabox:proof:smono1} is the second inequality in \eqref{eq:lem:smono:2} with
    \begin{align}
		\A &= \L_r \subsetneq (\L_r \cup \Lt_r) = \L_{r+1} = \B \subsetneq \H, \\
		\C &= \U_{r+1} \subseteq \U_r = \D \subsetneq \H.
    \end{align}
    Based on Remark \ref{rem:rnabox_LcupU}, we see that $\A \cup \D \subsetneq \H$, $\A \cap \D = \emptyset$, and $\B \cup 
    \C \subsetneq \H$, $\B \cap \C = \emptyset$, and thus the first inequality in \eqref{eq:lem:smono:2} follows, that is
    \begin{equation}
		\label{eq:lem:s_in_rnabox:proof:smono}
		s(\L_r,\, \U_r) \geq s(\L_{r+1},\, \U_{r+1}).
    \end{equation}	
    Hence \eqref{eq:lem:s_in_rnabox_any} is proved. 
	
	If $\x^*$ is a {\em regular} allocation, in view of Remark \ref{rem:rnabox_LcupU}, the same reasoning leading to inequality 
	\eqref{eq:lem:s_in_rnabox:proof:smono} clearly remains valid for $r = r^* -1,\, r^* \geq 2$.
\end{proof}

%%%
%%% RNABOX proof
%%%
\subsection{Proof of  Theorem \ref{th:rnabox}}

To prove Theorem \ref{th:rnabox}, we have to show that:
\begin{enumerate}[label=(\Roman*), nosep, labelsep=5pt]
	\item the algorithm terminates in a finite number of iterations, i.e. $r^* < \infty$, \label{rnabox_proof_I}
	\item the solution computed at $r^*$ is optimal. \label{rnabox_proof_II}
\end{enumerate}
	
The proof of part \ref{rnabox_proof_I} is relatively straightforward. In every iteration $r = 1, \ldots, r^* - 1,\, r^* \geq 2$, the 
set of strata labels $\H$ is reduced by subtracting $\Lt_r$. Therefore, $r^* \leq \card{\H} + 1 < \infty$, where $r^* = 
\card{\H} + 1$ if and only if $\card{\Lt_r} = 1,\, r = 1, \ldots, r^* - 1$. In words, the algorithm stops after at most $\card{\H} + 1$ 
iterations.

In order to prove part \ref{rnabox_proof_II}, following Theorem \ref{th:optcond} and Remark \ref{rem:optcond:reg_ver}, it 
suffices to show that when $\x^*$ (computed at {\footnotesize Step \ref{alg:rnabox:ret}} of {\em RNABOX} algorithm) is a 
{\em regular} allocation, for all $h \in \H$,
\begin{gather}
	h \in \L_{r^*} \quad \Leftrightarrow \quad s(\L_{r^*},\, \U_{r^*}) \leq \tfrac{m_h}{A_h},  
	\label{eq:th:rnabox:proof:optcond_L} \\
	h \in \U_{r^*} \quad \Leftrightarrow \quad s(\L_{r^*},\, \U_{r^*}) \geq \tfrac{M_h}{A_h}, 
	\label{eq:th:rnabox:proof:optcond_U}
\end{gather}
and when $\x^*$ is a {\em vertex} allocation
\begin{gather}
	\max_{h \in \U_{r^*}} \tfrac{M_h}{A_h} \leq \min_{h \in \L_{r^*}} \tfrac{m_h}{A_h}, \quad \text{when } \U_{r^*} \neq 
	\emptyset \text{ and } \L_{r^*} \neq \emptyset, \label{eq:th:rnabox:proof:optcond_LU_ver} \\
	m_{\L_{r^*}} + M_{\U_{r^*}} = n. \label{eq:th:rnabox:proof:optcond_LUn_ver}
\end{gather}
   
\begin{enumerate}[wide, labelindent=0pt, leftmargin=*, font=\itshape]
   	\item[Vector $\x^*$ is a {\em regular} allocation:]
   	Note that Remark  \ref{rem:rnabox_LcupU}, implies that $s(\L_r,\, \U_r)$ is well-defined. We start with equivalence 
   	\eqref{eq:th:rnabox:proof:optcond_L}. \\
    {\em Necessity:} For $r^* = 1$, we have $\L_{r^*} = \emptyset$ and hence, the right-hand side of equivalence     
    \eqref{eq:th:rnabox:proof:optcond_L} is trivially met. Let $r^* \geq 2$, and $h \in \L_{r^*} = \bigcup\limits_{r = 1}^{r^*-1} 
    \Lt_r$. Thus, $h \in \Lt_r$ for some $r \in \{1,\ldots,r^*-1\}$ and then, due to \eqref{eq:rem:rnabox_UUL:L}, we have 
    $s(\L_r,\, \U_r) \leq \tfrac{m_h}{A_h}$. Consequently, \eqref{eq:lem:s_in_rnabox_any} with
    \eqref{eq:lem:s_in_rnabox_reg} yield $s(\L_{r^*},\, \U_{r^*}) \leq \tfrac{m_h}{A_h}$. \\
    {\em Sufficiency:} Since $\Lt_{r^*} = \emptyset$, \eqref{eq:rem:rnabox_UUL:L} implies
    \begin{equation}
		\label{eq:th:rnabox:proof:L}
		\{h \in \H \setminus (\L_{r^*} \cup \U_{r^*}):\, s(\L_{r^*},\, \U_{r^*}) \leq \tfrac{m_h}{A_h}\} = \emptyset, \ind{r^* \geq 1.}
	\end{equation}
	On the other hand, \eqref{eq:rem:rnabox_UUL:U_reg} along with $m_h < M_h,\, h \in \H$, yield
	\begin{equation}
		s(\L_{r^*},\, \U_{r^*}) \geq \tfrac{M_h}{A_h} > \tfrac{m_h}{A_h}, \ind{h \in \U_{r^*},}
	\end{equation}
	and hence, \eqref{eq:th:rnabox:proof:L} reads
	\begin{equation}
		\label{eq:th:rnabox:proof:L_no_U}
		\{h \in \H \setminus \L_{r^*}:\, s(\L_{r^*},\, \U_{r^*}) \leq \tfrac{m_h}{A_h}\} = \emptyset, \ind{r^* \geq 1.}
	\end{equation}
	
	The proof of necessity in \eqref{eq:th:rnabox:proof:optcond_U} is immediate in view of \eqref{eq:rem:rnabox_UUL:U_reg}, 
	whilst sufficiency follows by contradiction. Indeed, let $r^*\ge 1$. Assume that the right-hand side of equivalence 
	\eqref{eq:th:rnabox:proof:optcond_U} holds and $h \not\in \U_{r^*}$. Then, in view of Remark \ref{rem:rnabox_LcupU}, 
	either $h \in \H \setminus (\L_{r^*} \cup \U_{r^*})$ and then from \eqref{eq:rem:rnabox_rna_x}
	\begin{equation}
		s(\L_{r^*},\, \U_{r^*}) < \tfrac{M_h}{A_h}, 
	\end{equation}
	a contradiction, or $h \in \L_{r^*}$ and then from \eqref{eq:th:rnabox:proof:optcond_L},  in view of $m_h < M_h,\, h \in \H$,
	\begin{equation}
		\label{eq:th:rnabox:proof:reg_Uc}
		s(\L_{r^*},\, \U_{r^*}) \leq \tfrac{m_h}{A_h} < \tfrac{M_h}{A_h},
	\end{equation}
	a contradiction.
		
   	\item[Vector $\x^*$ is a {\em vertex} allocation:]
	For $r^* = 1$, the only possibility is that $\U_{r^*} = \H,\, \L_{r^*} = \emptyset$. Then,
	\eqref{eq:th:rnabox:proof:optcond_LU_ver} is clearly met, while \eqref{eq:th:rnabox:proof:optcond_LUn_ver} follows from 
	{\footnotesize Step \ref{alg:rnabox:rna}} of the {\em RNABOX} after referring to \eqref{eq:th:optcond_one_sided:M}.
	Let $r^* \geq 2$. Then, by \eqref{eq:rem:rnabox_UUL:U} we have
	\begin{equation}
		\label{eq:th:rnabox:proof:ver_U}
		s(\L_{r^*-1},\, \U_{r^*-1}) \geq \tfrac{M_h}{A_h}, \ind{h \in \U_{r^*-1} \supseteq \U_{r^*},}
	\end{equation}
	where the set inclusion is due to Lemma \ref{lem:U}. 
	On the other hand, from \eqref{eq:rem:rnabox_UUL:L}, we get
	\begin{equation}
		\label{eq:th:rnabox:proof:ver_L}
		s(\L_{r^*-1},\, \U_{r^*-1}) \leq \tfrac{m_h}{A_h}, \ind{h \in \L_{r^*-1} \cup \Lt_{r^*-1} = \L_{r^*},}
	\end{equation}
	where the fact that the above inequality is met for $h \in \L_{r^*-1}$ follows from \eqref{eq:lem:s_in_rnabox_any}. By 
	comparing \eqref{eq:th:rnabox:proof:ver_U} and \eqref{eq:th:rnabox:proof:ver_L} we clearly see that 
	\eqref{eq:th:rnabox:proof:optcond_LU_ver} is satisfied. Lastly, equation \eqref{eq:th:rnabox:proof:optcond_LUn_ver} is 
	fulfilled due to
	\begin{equation}
		n - m_{\Lt_1} - \ldots - m_{\Lt_{r*-1}} = n - m_{\L_{r^*}} = M_{\U_{r^*}},
	\end{equation}
	where the first equality follows from \eqref{eq:rnabox_Lsum} while the second one follows from {\footnotesize Step 
	\ref{alg:rnabox:rna}} of the {\em RNABOX}  after referring to \eqref{eq:th:optcond_one_sided:M} and using the fact that 
	$\U_{r^*} = \H \setminus \L_{r^*}$ for a {\em vertex} allocation.
\end{enumerate}

%%%%%%%%%%%%%%%%%%%%%%%%%%%%%%%%%%%%%%%%%%
%																																		%
%			APPENDIX: Optimality conditions upper-bounds only problem					%
%																																		%
%%%%%%%%%%%%%%%%%%%%%%%%%%%%%%%%%%%%%%%%%%
\section{Appendix: Optimality conditions for Problem \ref{prob:upper}}
\label{app:rna}

The following Theorem \ref{th:optcond_one_sided} provides necessary and sufficient conditions for the optimal solution to 
Problem \ref{prob:upper}. It was originally given as Theorem 1.1 in \citet{WWW} and it is crucial for the proof of Theorem 
\ref{th:rnabox}. Here, we will quote it in a slightly expanded form so that it also covers the case of $\U^* = \H$. As usual, the
set function $s$ is defined as in Definition \ref{def:s}. The algorithm that solves Problem \ref{prob:upper} is {\em RNA} and it is
given in Section \ref{sec:rnabox} of this paper.

\begin{theorem}
    \label{th:optcond_one_sided}
    The optimization Problem \ref{prob:upper} has a unique optimal solution. Point $\x^* = (x_h^*,\, h \in \H) \in \R_+^{\card{\H}}$ is a solution to optimization Problem \ref{prob:upper} if and only if $\x^*$ has entries of the form
    \begin{equation}
		x_h^* =
		\begin{cases}
			M_h,									& \ind{h \in \U^*} \\
			A_h\, s(\emptyset,\, \U^*),	& \ind{h \in \H \setminus \U^*,}
		\end{cases}
    \end{equation}
    with $\U^* \subseteq \H$, such that one of the following two cases holds:
    \begin{enumerate}[wide, labelindent=0pt, leftmargin=*]
		\item[CASE I:] $\U^* \subsetneq \H$ and
		\begin{equation}
			\U^* = \left\{h \in \H:\, A_h\, s(\emptyset,\, \U^*) \ge M_h \right\}.
		\end{equation}	
		\item[CASE II:] $\U^* = \H$ and
		\begin{equation}
			\label{eq:th:optcond_one_sided:M}
			n = \sum_{h \in \H} M_h.
		\end{equation}
    \end{enumerate}
\end{theorem}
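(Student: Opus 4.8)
The plan is to specialize the Karush--Kuhn--Tucker (KKT) argument used for Theorem \ref{th:optcond} to the one-sided setting, in which the lower bounds and the multipliers $\mu_h^m$ simply drop out (equivalently $\L^* = \emptyset$). Since CASE~I coincides with Theorem~1.1 of \citet{WWW}, strictly only the boundary CASE~II ($\U^* = \H$) is new; for completeness I would nonetheless give a self-contained proof. First I would dispatch existence and uniqueness exactly as in the proof of Theorem \ref{th:optcond}: the feasible set is non-empty because $0 < n \le \sum_{h \in \H} M_h$, the objective is continuous and tends to $+\infty$ as any $x_h \to 0^+$ so a minimiser is attained, and strict convexity of $\x \mapsto \sum_{h \in \H} A_h^2/x_h$ forces it to be unique. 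I would then note that Problem \ref{prob:upper} is convex with affine inequality constraints $g_h^M(\x) = x_h - M_h$ and an affine equality constraint, so the KKT conditions are both necessary and sufficient. Deleting the $\mu_h^m$ terms from \eqref{eq:kkt_prob_stat}--\eqref{eq:kkt_prob_compl2}, the relevant system becomes $-A_h^2/(x_h^*)^2 + \lambda + \mu_h^M = 0$, $\sum_{h \in \H} x_h^* = n$, $x_h^* \le M_h$, $\mu_h^M \ge 0$, and $\mu_h^M(x_h^* - M_h) = 0$, for $h \in \H$.

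For sufficiency, in CASE~I I would take $\x^* = \x^{(\emptyset,\, \U^*)}$ with $\U^* \subsetneq \H$ obeying the stated fixed-point identity; since $\x^* \in \R_+^{\card{\H}}$ this forces $s(\emptyset,\, \U^*) > 0$. Choosing $\lambda = 1/s^2(\emptyset,\, \U^*)$ and $\mu_h^M$ as in \eqref{eq:optcond_proof:mu} (namely $A_h^2/M_h^2 - \lambda$ on $\U^*$ and $0$ elsewhere), the equivalence $A_h\, s(\emptyset,\, \U^*) \ge M_h \Leftrightarrow \lambda \le A_h^2/M_h^2 \Leftrightarrow \mu_h^M \ge 0$ (valid because $s > 0$) delivers dual feasibility, complementary slackness holds because $x_h^* = M_h$ on $\U^*$ while $\mu_h^M = 0$ off it, and stationarity reduces on each piece to $-1/s^2 + \lambda = 0$. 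In CASE~II the allocation is $\x^* = (M_h,\, h \in \H)$ with $n = \sum_{h \in \H} M_h$, and any $\lambda \le \min_{h \in \H} A_h^2/M_h^2$ together with $\mu_h^M = A_h^2/M_h^2 - \lambda \ge 0$ satisfies the system trivially.

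For necessity I would invoke the KKT multipliers $\lambda$ and $\mu_h^M \ge 0$ furnished by the unique optimum $\x^*$ and set $\U^* := \{h \in \H:\, x_h^* = M_h\}$. If $\U^* = \H$ then summing $\sum_{h \in \H} x_h^* = n$ gives $n = \sum_{h \in \H} M_h$, which is CASE~II. Otherwise, for $h \notin \U^*$ complementary slackness gives $\mu_h^M = 0$, so stationarity yields $x_h^* = A_h/\sqrt{\lambda}$; finiteness and positivity of $x_h^*$ force $\lambda > 0$, and with $s := 1/\sqrt{\lambda} > 0$ the constraint $\sum_{h \in \H} x_h^* = n$ identifies $s = s(\emptyset,\, \U^*)$, so $\x^*$ has the required form. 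It then remains to confirm the fixed-point identity: for $h \in \U^*$ stationarity gives $\mu_h^M = A_h^2/M_h^2 - \lambda \ge 0$, i.e. $A_h\, s \ge M_h$, while for $h \notin \U^*$ the very definition of $\U^*$ gives $x_h^* = A_h\, s < M_h$; hence $\U^* = \{h \in \H:\, A_h\, s(\emptyset,\, \U^*) \ge M_h\}$, which is CASE~I.

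The routine parts are existence, uniqueness and sufficiency, which follow the template of Theorem \ref{th:optcond} almost verbatim. The main obstacle I anticipate is the bookkeeping in the necessity step: establishing $\lambda > 0$ so that $s$ is well defined, and then matching the \emph{strict} inequality $A_h\, s < M_h$ on $\H \setminus \U^*$ against the \emph{non-strict} $A_h\, s \ge M_h$ on $\U^*$, so that the active set $\U^*$ coincides exactly with the self-referential set defining CASE~I. Keeping these strict/non-strict distinctions consistent, and cleanly separating off the $\U^* = \H$ branch, is the only place where care is genuinely required.
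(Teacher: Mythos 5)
Your proof is correct, but note that the paper itself never proves this statement: Theorem \ref{th:optcond_one_sided} is quoted from Theorem 1.1 of \citet{WWW}, ``slightly expanded'' to cover the boundary case $\U^* = \H$, and no proof of the expanded form is given in the text. So there is no in-paper argument to compare against; the natural benchmark is the paper's Appendix B proof of the two-sided Theorem \ref{th:optcond}, and your proposal is exactly that argument specialized to the one-sided setting, with the KKT conditions of Appendix E (necessary \emph{and} sufficient here because all constraints are affine) doing the work in both directions. Two points where you correctly deviate from the two-sided template deserve emphasis, since they are the only places where a verbatim copy would fail. First, existence: the feasible set of Problem \ref{prob:upper} is \emph{not} closed (there is no positive lower bound on $x_h$), so the paper's ``continuous function on a closed and bounded set'' argument does not carry over; your coercivity argument ($f \to +\infty$ as any $x_h \to 0^+$, so a sublevel set through any feasible point is compact) is the right fix. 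Second, CASE II: this is precisely the part absent from WWW's original statement, and your choice of multipliers $\lambda \le \min_{h \in \H} A_h^2/M_h^2$, $\mu_h^M = A_h^2/M_h^2 - \lambda \ge 0$ verifies it cleanly; the non-uniqueness of $\lambda$ there mirrors the interval of admissible $\tilde{s}$ in the paper's CASE II of Theorem \ref{th:optcond}. The only sketchy spot is sufficiency in CASE I, where you should state explicitly that primal feasibility holds: $\sum_{h} x_h^* = n$ is immediate from the definition of $s(\emptyset,\U^*)$, and $x_h^* = A_h\, s(\emptyset,\U^*) < M_h$ for $h \notin \U^*$ follows from the complement of the fixed-point identity; both are one-line observations, so this is an omission of bookkeeping rather than a gap.
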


%%%%%%%%%%%%%%%%%%%%%%%%%%%%%%%%%%%%%%%%%%
%																																		%
%			APPENDIX: Convex optimization scheme and the KKT conditions				 %
%																																		%
%%%%%%%%%%%%%%%%%%%%%%%%%%%%%%%%%%%%%%%%%%
\section{Appendix: Convex optimization scheme and the KKT conditions} 
\label{app:kkt}

A convex optimization problem is an optimization problem in which the objective function is a convex function and the feasible set is a convex set. In standard form it is written as
\begin{equation}
    \label{prob:convex}
    \begin{split}
		\underset{\x\, \in\, \D}{\texteq{minimize ~\,}} & \quad f(\x) \\
		\texteq{subject ~ to}  & \quad w_i(\x) = 0, \quad {i = 1, \ldots, k} \\
		& \quad g_j(\x) \le 0, \quad{j = 1, \ldots, \ell,}
    \end{split}
\end{equation}
where $\x$ is the optimization variable, $\D \subseteq \R^p,\, p \in \N_+$, the objective function $f:\D_f \subseteq \R^p 
\to \R$ and inequality constraint functions $g_j: \D_{g_j} \subseteq \R^p \to \R,\, j = 1, \ldots, \ell$, are convex, whilst equality 
constraint functions $w_i: \D_{w_i} \subseteq \R^p \to \R,\, i = 1, \ldots, k$, are affine. Here, $\D = \D_f \cap \bigcap_{i=1}^{k} 
\D_{w_i} \cap \bigcap_{j=1}^{\ell} \D_{g_j}$ denotes a common domain of all the functions. Point $\x \in \D$ is called {\em 
feasible} if it satisfies all of the constraints, otherwise the point is called {\em infeasible}. An optimization problem is called 
{\em feasible} if there exists $\x \in \D$ that is {\em feasible}, otherwise the problem is called {\em infeasible}.

In the context of the optimum allocation Problem \ref{prob} discussed in this paper, we are interested in a particular type of 
the convex problem, i.e. \eqref{prob:convex} in which all inequality constraint functions $g_j,\, j = 1, \ldots, \ell$, are affine. It 
is well known, see, e.g. the monograph \citet{Boyd}, that the solution for such an optimization problem can be identified 
through the set of equations and inequalities known as the Karush-Kuhn-Tucker (KKT) conditions, which in this case are not 
only necessary but also sufficient.
\begin{theorem}[KKT conditions for convex optimization problem with affine inequality constraints]
    A point $\x^* \in \D \subseteq \R^p,\, p \in \N_+$, is a solution to the convex optimization problem \eqref{prob:convex} in 
    which functions $g_j,\, j = 1, \ldots, \ell$, are affine if and only if there exist numbers $\lambda_i \in \R$, $i = 1, \ldots, k$, 
    and $\mu_j \geq 0$, $j = 1, \ldots, \ell$, called KKT multipliers, such that
    \begin{equation}
		\label{KKT}
		\begin{gathered}
			\nabla f(\x^*)+\sum_{i=1}^k \lambda_i \nabla w_i(\x^*) + \sum_{j=1}^\ell \mu_j \nabla g_j(\x^*) = \zero \\
			w_i(\x^*) = 0, \ind{i = 1, \ldots, k} \\
			g_j(\x^*) \le 0, \ind{j = 1, \ldots, \ell} \\
			\mu_j g_j(\x^*) = 0, \ind{j = 1, \ldots, \ell.}
		\end{gathered}
    \end{equation}
\end{theorem}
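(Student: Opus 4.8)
The plan is to establish the two implications of the equivalence separately, using convexity for the sufficiency direction and polyhedral duality for the necessity direction; the running theme is that the hypothesis ``all $g_j$ affine'' (together with the affine $w_i$) makes the feasible set a polyhedron, which is exactly what lets me avoid any constraint qualification. For sufficiency (KKT multipliers exist $\Rightarrow$ $\x^*$ optimal) I would argue purely by convexity. Given $\x^*$ and multipliers $\lambda_i \in \R$, $\mu_j \ge 0$ satisfying \eqref{KKT}, form the Lagrangian $L(\x) = f(\x) + \sum_{i=1}^k \lambda_i w_i(\x) + \sum_{j=1}^\ell \mu_j g_j(\x)$. Because $f$ and the $g_j$ are convex, the $w_i$ affine, and $\mu_j \ge 0$, the function $L$ is convex, and the stationarity equation is precisely $\nabla L(\x^*) = \zero$; hence $\x^*$ is a global minimizer of $L$. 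For any feasible $\x$ one then chains $f(\x) \ge L(\x) \ge L(\x^*) = f(\x^*)$, where the first step uses $w_i(\x) = 0$ and $\mu_j g_j(\x) \le 0$, and the last equality uses $w_i(\x^*) = 0$ together with complementary slackness $\mu_j g_j(\x^*) = 0$.

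The necessity direction (optimality $\Rightarrow$ existence of multipliers) is where I expect the real work, and it is the step that genuinely exploits affineness. Writing $w_i(\x) = a_i^\top \x - b_i$ and $g_j(\x) = c_j^\top \x - d_j$ with constant gradients $a_i = \nabla w_i(\x^*)$ and $c_j = \nabla g_j(\x^*)$, the feasible set is a polyhedron $P$. First I would record the variational characterization of a minimizer of a convex function over a convex set: $\x^*$ is optimal if and only if $\nabla f(\x^*)^\top(\x - \x^*) \ge 0$ for every $\x \in P$, which follows because convexity gives $f(\x) \ge f(\x^*) + \nabla f(\x^*)^\top(\x - \x^*)$ and, conversely, a direction of strict decrease would stay feasible along a short segment. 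The crux, and the main obstacle, is to convert this into a statement about the finitely many constraint gradients. Here I would use that for a polyhedron the tangent cone at $\x^*$ equals the linearized cone $\{d : a_i^\top d = 0,\ c_j^\top d \le 0 \text{ for active } j\}$, where $j$ is active when $g_j(\x^*) = 0$. This identity holds automatically for affine constraints and is precisely the point at which linearity substitutes for Slater-type conditions; for a general convex program it can fail, and then KKT necessity can fail too.

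The closing step is Farkas' lemma. The variational inequality says $\nabla f(\x^*)^\top d \ge 0$ for every $d$ in the tangent cone, i.e. $-\nabla f(\x^*)$ lies in its polar; Farkas identifies that polar as the finitely generated cone $\{\sum_i \lambda_i a_i + \sum_{j \text{ active}} \mu_j c_j : \lambda_i \in \R,\ \mu_j \ge 0\}$. This yields multipliers with $-\nabla f(\x^*) = \sum_i \lambda_i a_i + \sum_{j \text{ active}} \mu_j c_j$, which is exactly the stationarity equation of \eqref{KKT} after recalling $a_i = \nabla w_i(\x^*)$ and $c_j = \nabla g_j(\x^*)$. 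Setting $\mu_j = 0$ for the inactive constraints makes complementary slackness $\mu_j g_j(\x^*) = 0$ hold by construction, while primal feasibility is just $\x^* \in P$. I would present the variational inequality and the tangent-cone identity as the two substantive lemmas and treat Farkas as the engine delivering the multipliers.
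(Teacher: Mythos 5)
Your proposal is correct, but there is nothing in the paper to compare it against: the paper does not prove this theorem at all. It is quoted in Appendix E as a known result, with the justification deferred entirely to the citation of the monograph of Boyd and Vandenberghe (``It is well known, see, e.g.\ the monograph...''), and the theorem is then only \emph{applied} in the proof of Theorem \ref{th:optcond}. What you have written is the standard textbook argument, and it is sound. The sufficiency half is exactly right: with $\mu_j \ge 0$ the Lagrangian $L$ is convex on the convex domain $\D$, stationarity makes $\x^*$ a global minimizer of $L$, and the chain $f(\x) \ge L(\x) \ge L(\x^*) = f(\x^*)$ for feasible $\x$ uses precisely primal feasibility and complementary slackness. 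The necessity half correctly isolates the two substantive ingredients --- the first-order variational inequality $\nabla f(\x^*)^\top(\x - \x^*) \ge 0$ over the feasible set, and the identity between the tangent cone and the linearized cone at $\x^*$, which holds unconditionally for a polyhedral feasible set and is exactly what lets you dispense with Slater-type qualifications --- and the Farkas/polar-cone step then delivers multipliers with the right signs ($-\nabla f(\x^*)$ in the cone generated by the $a_i$ with free coefficients and the active $c_j$ with nonnegative coefficients), with $\mu_j = 0$ on inactive constraints giving complementary slackness for free. The only things I would ask you to make explicit in a full write-up are the proofs of your two ``substantive lemmas'' (the variational characterization and the tangent-cone identity), since those carry all the weight; as presented they are asserted rather than proved, which is acceptable given that the paper itself offers only a citation.
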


%%%%%%%%%%%%%%%%%%%%%%%%%%%%%%%%%%%%%%%%%%
%																																		%
%												REFERENCES																   %
%																																		%
%%%%%%%%%%%%%%%%%%%%%%%%%%%%%%%%%%%%%%%%%%
\bibliography{rnabox_rev_final_arxiv}
% \bibliography{rnabox} 
\end{document}